\newtheorem{theorem}{Theorem}[section]
\newtheorem{definition}[theorem]{Definition}
\newtheorem{lemma}[theorem]{Lemma}
\newtheorem{corollary}[theorem]{Corollary}
\newtheorem{observation}[theorem]{Observation}
\newtheorem{conjecture}{Conjecture}
\newcommand{\Tt}{{\ensuremath{\mathcal{T}}}}
\newcommand{\Oh}{{\ensuremath{\mathcal{O}}}}
\newcommand{\anonymyze}[1]{}%
\newcommand{\td}{\mathsf{td}}
\newcommand{\floor}[1]{\ensuremath{\left\lfloor{#1}\right\rfloor}}%
\newcommand{\ignore}[1]{}%
\newcommand{\ProblemFormat}[1]{\textsc{#1}}
\newcommand{\ProblemName}[1]{\ProblemFormat{#1}\xspace}
\newcommand{\MinHFunc}{\mathsf{MinHLT}}
\newcommand{\MinHProb}{\ProblemName{MinHLT}}
\newcommand{\DualMinHProb}{\ProblemName{DualMinHLT}}
\newcommand\MinLLTProb{\ProblemName{MinLLT}}
\newcommand\MaxLLTProb{\ProblemName{MaxLLT}}
\newcommand{\Compr}{\mathrm{Compr}}
\newlength{\RoundedBoxWidth}
\newsavebox{\GrayRoundedBox}
\newenvironment{GrayBox}[1]%
   {\setlength{\RoundedBoxWidth}{.93\textwidth}
    \def\boxheading{#1}
    \begin{lrbox}{\GrayRoundedBox}
       \begin{minipage}{\RoundedBoxWidth}}%
   {   \end{minipage}
    \end{lrbox}
    \begin{center}
    \begin{tikzpicture}%
       \node(Text)[draw=black!20,fill=white,rounded corners,%
             inner sep=2ex,text width=\RoundedBoxWidth]%
             {\usebox{\GrayRoundedBox}};
        \coordinate(x) at (current bounding box.north west);
        \node [draw=white,rectangle,inner sep=3pt,anchor=north west,fill=white] 
        at ($(x)+(6pt,.75em)$) {\boxheading};
    \end{tikzpicture}
    \end{center}}     
\newenvironment{defproblemx}[2][]{\noindent\ignorespaces%
                                \FrameSep=6pt%
                                \parindent=0pt%
                \ifthenelse{\isempty{#1}}{%
                  \begin{GrayBox}{\textsc{#2}}%
                }{%
                  \begin{GrayBox}{\textsc{#2} parameterized by~{#1}}%
                }
                \begin{tabular*}{\textwidth}{@{\hspace{.1em}} >{\itshape} p{1.8cm} p{0.8\textwidth} @{}}%
            }{
                \end{tabular*}%
                \end{GrayBox}%
                \ignorespacesafterend
            }  
\newcommand{\fancyparproblemdef}[4]{
  \begin{defproblemx}[#3]{#1}
    Input:  & #2 \\
    Question: & #4
  \end{defproblemx}
}%
\newcommand{\fancyproblemdef}[3]{
  \begin{defproblemx}{#1}
    Input:  & #2 \\
    Question: & #3
  \end{defproblemx}
}%
\newcommand\mso[1]{\ensuremath{\mathsf{#1}}}
\newcommand\MSOone{\ensuremath{\mathsf{MSO}_1}\xspace}
\newcommand\MSOtwo{\ensuremath{\mathsf{MSO}_2}\xspace}
\newcommand\bags{\beta}
\newcommand\tw{\mathsf{tw}}
\newcommand\const{\mathrm{DFS}}
\newcommand\roots{R}
\newcommand\subparagraphstar[1]{\subparagraph*{#1}}
\newcommand\paragraphstar[1]{\paragraph*{#1}}
\title{A Parameterized Complexity Analysis of Bounded Height Depth-first Search Trees}
\author[1]{Lars Jaffke}
\author[2]{Paloma T.\ de Lima}
\author[3]{Wojciech Nadara}
\author[4]{Emmanuel Sam}
\affil[1]{NHH Norwegian School of Economics, Bergen, Norway}
\affil[2]{IT University of Copenhagen, Copenhagen, Denmark}
\affil[3]{University of Warsaw, Warsaw, Poland and Technical University of Denmark, Lyngby, Denmark}
\affil[4]{University of Bergen, Bergen, Norway}
\date{\today}
\begin{document}

\maketitle

\begin{abstract}
    Computing bounded depth decompositions is a bottleneck in many applications of the treedepth parameter. 
    The fastest known algorithm,
    which is
    due to Reidl, Rossmanith, S{\'{a}}nchez Villaamil, and Sikdar [ICALP 2014], 
    runs in $2^{\Oh(k^2)}\cdot n$ time
    and it is a big open problem whether the dependency on $k$ can be improved to $2^{o(k^2)}\cdot n^{\Oh(1)}$.
    We show that the related problem of finding DFS trees of bounded height can be solved faster in $2^{\Oh(k \log k)}\cdot n$ time.
    As DFS trees are treedepth decompositions,
    this circumvents the above mentioned bottleneck for this subclass of graphs of bounded treedepth.
    This problem has recently found attention independently 
    under the name 
    \MinHProb
    and our algorithm gives a positive answer to an open problem posed by Golovach [Dagstuhl Reports, 2023].
    We complement our main result by studying the complexity of \MinHProb and related problems in several other settings.
    First, we show that it remains NP-complete on chordal graphs, 
    and give an FPT-algorithm on chordal graphs for the dual problem, asking for a DFS tree of height at most $n-k$, parameterized by $k$.
    The parameterized complexity of \DualMinHProb on general graphs is wide open.
    Lastly, we show that \DualMinHProb and two other problems concerned with finding DFS trees with few or many leaves are FPT parameterized by $k$ plus the treewidth of the input graph.
\end{abstract}

\section{Introduction}
Treedepth~\cite{Nesetril2012} is a graph parameter with many interesting algorithmic applications.
As it is more restrictive than treewidth, 
there are several problems that can be solved more efficiently on graphs of bounded treedepth than on graphs of bounded treewidth,
in particular in terms of space complexity~\cite{FurerY17,HegerfeldK20,NederlofPSW23,PilipczukS21,PilipczukW18},
and in some cases in terms of time complexity~\cite{DvorakK18,KellerhalsK22}.
One downside of treedepth is that the current fastest algorithm to compute decompositions of depth $k$ runs in $2^{\Oh(k^2)}\cdot n$ time~\cite{Reidl14},
and it is a big open problem to determine whether the dependence on $k$ can be improved to $2^{o(k^2)}$ --- even if we allow a higher degree of the polynomial dependence on $n$,
or if we only want to have a constant-factor approximation instead of an optimal decomposition.

Bounded treedepth decompositions are rooted trees $T$ of bounded height 
on the vertex set of a graph $G$ that do not give rise to any cross-edges, 
i.e., for each edge $uv \in E(G)$, $u$ and $v$ are in an ancestral relationship in $T$. 
This property is shared with trees resulting from executing a depth-first search (DFS) on $G$;
with the difference that in a treedepth decomposition $T$, the edges of $T$ need not be edges of $G$.
Therefore, the class of graphs that have a DFS tree of bounded height is a subclass of the class of graphs of bounded treedepth.
Our first main result is that in this subclass of graphs of bounded treedepth, 
decompositions of depth $k$ can be computed faster, 
namely in $2^{\Oh(k \log k)}\cdot n$ time, see below.

The problem of computing DFS trees of small height has recently been studied 
by Sam, Fellows, Rosamond, and Golovach~\cite{SamFRG23},
motivated by several algorithmic applications~\cite{BODLAENDER19931,downeyFellow2013,fellows1989Polytime,freuder1985}.
They referred to DFS trees as \emph{lineal topologies},
and we keep the problem names introduced in their work for consistency.
\fancyproblemdef
    {Minimum Height Lineal Topology (\MinHProb)}
    {Graph $G$, integer $k$}
    {Does $G$ have a DFS tree (lineal topology) of height at most $k$?}

Fellows, Friesen, and Langston~\cite{fellows1988} showed that the the above problem is NP-complete
and hard to approximate.
Sam et al.~\cite{SamFRG23} observed that \MinHProb parameterized by the target height $k$ is FPT via the following win/win-approach.
If a graph has a DFS tree of height at most $k$, then it has treedepth at most $k$, which in turn implies that the treewidth is at most $k$.
The property of having a DFS tree with height at most $k$ can be expressed in MSO which yields the result via Courcelle's Theorem for bounded treewidth graphs~\cite{Courcelle90}.
However, an explicit algorithm for \MinHProb was not known, and asked for as an open problem by Golovach~\cite{DagstuhlDrawing}.
Our first contribution is a resolution of this problem via the following explicit algorithm.
\begin{restatable}{theorem}{thmMinHProb}\label{thm:fpt-td2}
    There is an algorithm that solves \MinHProb 
    given an $n$-vertex graph $G$ and an integer $k$ 
    in $2^{\Oh(k \log k)} \cdot n$ time.
    If $(G, k)$ is a yes-instance, 
    then the algorithm returns a certifying DFS tree.
\end{restatable}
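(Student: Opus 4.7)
The plan is to combine a fast FPT approximation for treewidth with a dynamic-programming (DP) routine on the resulting decomposition.
Since treewidth is bounded from above by treedepth minus one, any yes-instance $G$ has treewidth at most $k-1$, so we may invoke Korhonen's $2$-approximation algorithm for treewidth to obtain, in time $2^{\Oh(k)}\cdot n$, either a tree decomposition $(T,\bags)$ of $G$ of width $\Oh(k)$, or a correct report that $(G,k)$ is a no-instance. Converting $(T,\bags)$ to a nice tree decomposition takes linear time, and the DP will run bottom-up on the resulting structure.

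At each node $t$ I maintain a table indexed by \emph{profiles}, each profile encoding the restriction of a candidate DFS tree to $\bags(t)$ together with the ancestors of the bag vertices in that tree. Concretely, a profile is a tuple $(\phi,F,\pi)$ where $\phi\colon\bags(t)\to[k]$ is a depth assignment; $F$ is a rooted tree on $\bags(t)$ augmented by a set of ``anonymous'' ancestor slots, with each node carrying a depth in $[k]$ consistent with $\phi$; and $\pi$ records, for each $v\in\bags(t)$, whether $v$'s parent in the candidate DFS tree lies in $\bags(t)$ (and if so which vertex) or is still external. A profile is \emph{valid} at $t$ if there exists a partial DFS structure on the vertices introduced in the subtree of $t$ that agrees with $(\phi,F,\pi)$ and respects the ancestor-descendant property for every edge of $G$ already witnessed inside a bag.

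The crucial combinatorial bound is that each bag carries only $2^{\Oh(k\log k)}$ profiles. The depth assignment contributes at most $k^{|\bags(t)|}=2^{\Oh(k\log k)}$ choices; the augmented tree $F$, having $\Oh(k)$ leaves and $\Oh(k)$ further internal nodes each decorated by a label (bag vertex or anonymous) and a depth in $[k]$, has at most $2^{\Oh(k\log k)}$ shapes by standard counting of rooted labelled trees; and $\pi$ adds another $\Oh(k)^{\Oh(k)}$ factor. The transitions are the usual ones for a nice tree decomposition: at an introduce node I place the new vertex somewhere in $F$ and verify adjacency to its claimed in-bag parent; at a forget node I check that every $G$-edge between the forgotten vertex and the remaining bag respects the ancestor-descendant property and settle the deferred ``parent is a neighbour'' condition if the parent was external; at a join node I match the two child profiles on the common bag and identify their anonymous ancestors at matching depths compatibly. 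Each transition runs in $2^{\Oh(k\log k)}$ time, and summing over $\Oh(n)$ nodes gives the claimed running time. A standard backtracking pass recovers a certifying DFS tree whenever one exists.

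The main obstacle is the join operation, where the two child profiles must agree on which anonymous ancestors of bag vertices coincide. Preserving the $2^{\Oh(k\log k)}$ state bound---rather than the $2^{\Oh(k^{2})}$ one gets by recording each anonymous ancestor explicitly---forces the profile to track only the \emph{shape} of $F$ up to renaming of anonymous nodes, and the join must realign these anonymous nodes across the two sides using only their depths and their clustering among the bag vertices. The other delicate point is the deferred adjacency check for external parents, which must be processed exactly when the parent is materialised by an introduce node or identified by a join.
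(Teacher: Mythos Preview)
Your high-level plan (Korhonen's $2$-approximation followed by DP on a nice tree decomposition) matches the paper, and your shape-counting for the skeleton tree $F$ is essentially the paper's \Cref{lem:trees-shapes}. However, the profile you propose is missing a crucial piece of information, and without it the DP is not correct.

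The missing ingredient is what the paper calls the set $S$: for every anonymous ancestor slot in your tree $F$, you must know whether it has \emph{already} been occupied by a forgotten vertex of $G_t$, or whether it is still free to be filled by a vertex introduced later. Your tuple $(\phi,F,\pi)$ does not record this. At a join node the two children must agree on $(\phi,F,\pi)$ \emph{and} have disjoint sets of occupied anonymous slots; otherwise two distinct forgotten vertices from the two sides are mapped to the same node of the final DFS tree. ``Realigning anonymous nodes using only their depths and clustering'' does not address this: two child profiles with identical $(\phi,F,\pi)$ can still have incompatible occupancy patterns, and you have no way to detect it. Adding the occupancy information naively records one bit per anonymous node, and since $F$ may have $\Theta(k^{2})$ anonymous nodes (it is a union of $\Oh(k)$ root-to-leaf paths of length up to $k$), this blows the state space back up to $2^{\Theta(k^{2})}$---exactly the obstacle you say you want to avoid.

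What makes the $k^{\Oh(k)}$ bound work, and what your write-up never invokes, is the one property that distinguishes DFS trees from treedepth decompositions: every edge of the DFS tree is an edge of $G$. Because the current bag separates forgotten vertices from not-yet-introduced vertices in $G$, two adjacent anonymous nodes of $F$ cannot have one occupied (forgotten) and the other free (future)---that would force a tree edge between non-adjacent vertices of $G$. Hence the occupied set $S$ is forced to be a union of connected components of $F\setminus\tau(\bags(t))$, of which there are only $\Oh(k)$, giving $2^{\Oh(k)}$ viable occupancy patterns rather than $2^{\Oh(k^{2})}$. This is the paper's notion of a \emph{promising} partial solution (\Cref{lem:promising-sets}), and it is the heart of the improvement from $2^{\Oh(k^{2})}$ to $k^{\Oh(k)}$. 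Your proposal needs to state and use this argument explicitly; without it the join step is both underspecified and, as described, unsound.
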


In~\cite{SamFRG23}, several variations of \MinHProb are studied from the classical and parameterized points of view, and the coarse (parameterized) complexity of most problems is understood, apart from one notable exception: 
\fancyparproblemdef
    {Dual Minimum Height Lineal Topology (\DualMinHProb)}
    {Graph $G$ on $n$ vertices, integer $k$}
    {$k$}
    {Does $G$ have a DFS tree (lineal topology) of height at most $n-k$?}

Curiously, all attempts at classifying the parameterized complexity of 
\DualMinHProb have remained fruitless. 
We conjecture that it is $W[1]$-hard, however there no matching XP algorithm 
for that claim and to the best of our knowledge, 
it is not known whether it is solvable in polynomial time even for the case $k=2$. 
While we do not resolve this problem, we give insight into it by considering its parameterized complexity on \emph{chordal} graphs by giving an FPT-algorithm for it.
To justify the relevance of this algorithm, we first show that the non-parameterized version of \MinHProb, and hence of \DualMinHProb, remains NP-complete on chordal graphs.
\begin{theorem}
    \MinHProb (and hence \DualMinHProb) remains NP-complete on chordal graphs. 
    Moreover, \DualMinHProb can be solved in $\Oh(n^2) + 2^{\Oh(k^2)}$ time on chordal graphs.
\end{theorem}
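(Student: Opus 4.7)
The theorem has two separate parts, which I would handle independently.

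\emph{Part 1: NP-completeness on chordal graphs.}
Containment in NP is immediate: a candidate DFS tree can be verified in polynomial time by checking that every non-tree edge connects an ancestor to a descendant and that the height is at most $k$. For NP-hardness, the plan is to reduce from a known NP-hard problem, for instance \MinHProb on general graphs or a classical combinatorial problem such as \textsc{3-Partition}. Since clique-sums preserve chordality, the gadgets would be built by gluing cliques at common vertices, arranging them in a tree-like fashion so that the overall graph is chordal. The main obstacle here is that many standard reductions for treedepth-flavoured problems rely on long induced cycles, which are forbidden in chordal graphs; so the gadgets must be re-engineered so that the DFS-tree height of the constructed graph faithfully encodes a yes/no answer for the source problem while using only cliques and their tree-like assemblies.

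\emph{Part 2: FPT algorithm running in $O(n^2)+2^{O(k^2)}$ time.}
The shape of the running time --- a polynomial in $n$ plus a purely $k$-dependent exponential --- strongly suggests a kernelisation strategy: preprocess down to an instance whose size depends only on $k$, then solve by brute force. Concretely, I would first compute a clique tree of $G$, its maximal cliques, and a perfect elimination ordering in $O(n^2)$ time. I would then prove a structural dichotomy: a chordal graph containing ``sufficiently many'' essentially independent simplicial vertices --- say, simplicial vertices attached to $\Omega(k)$ different maximal cliques --- automatically admits a DFS tree of height at most $n-k$. The intuition is that each such simplicial vertex can be realised as a leaf of the DFS tree lying \emph{off} the longest root-to-leaf path, so that each one contributes one unit of saving against the trivial upper bound $n-1$. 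When the abundance condition fails, reduction rules delete or merge redundant simplicial vertices (and symmetric leaves of the clique tree) to obtain an equivalent instance on $O(k^2)$ vertices; this kernel can then be solved by brute-force enumeration over DFS orderings in $2^{O(k^2)}$ time.

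The technical heart of the algorithm, and the main obstacle, is establishing the structural dichotomy with a bound tight enough to yield a kernel of size $O(k^2)$, and showing that the matching reduction rules are sound, i.e.\ they preserve the \DualMinHProb answer. A secondary challenge is to make the ``save one unit per simplicial vertex'' intuition rigorous in the presence of a chosen DFS root and an exploration order of the branches, which typically requires a careful induction on the clique tree together with a characterisation of when a particular ordering on the simplicial neighbourhoods of a clique can be realised by some DFS starting from a given vertex.
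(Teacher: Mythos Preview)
Both halves of your proposal diverge substantially from the paper's argument, and in the second half there is a genuine gap.

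\emph{Part 1.} Your sketch never commits to a source problem or a gadget, and the honest acknowledgement that ``standard reductions use long induced cycles'' is precisely the difficulty you would have to overcome. The paper avoids this entirely: it reduces from \textsc{Weighted Treedepth} on trees (strongly NP-hard by Dereniowski--Nadolski), blowing up each tree vertex $x$ of weight $w(x)$ into a clique of size $M\cdot w(x)$ and making adjacent cliques complete to each other. The resulting graph is automatically chordal, and a sandwich lemma
\[
M\cdot\td(T,w)\ \le\ \MinHFunc(G_M(T,w))\ \le\ M\cdot\td(T,w)+n^2
\]
with $M=n^2+1$ lets one read off $\td(T,w)$ from $\MinHFunc(G_M(T,w))$. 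Your clique-sum idea is in the right spirit, but without the weighted-treedepth source and the scaling trick there is no proof.

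\emph{Part 2.} The paper does \emph{not} kernelise. It finds (via the Robertson--Seymour balanced-separator lemma applied to a clique tree) a clique $K$ that is a balanced separator, and branches on $|K|$: if $|K|\le n-2k$ a Hamiltonian path through $K$ already yields a DFS tree of height $\le n-k$; if $|K|>n-k$ the answer is NO; and in the ``moderate'' range $n-2k<|K|\le n-k$ the complement $R=V(G)\setminus K$ has fewer than $2k$ vertices, vertices of $K$ are grouped by their neighbourhood in $R$ (at most $2^{2k}$ types), and one enumerates $2^{\Oh(k^2)}$ compressed templates for a DFS tree.

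Your kernel strategy has a concrete soundness problem. The rule ``delete a redundant simplicial vertex and decrement $n$'' does not preserve the dual parameter: on the path $v_1v_2v_3$ we have $\MinHFunc=2$ and $n-k=3-k$, so $k=1$ is YES; deleting the simplicial vertex $v_1$ gives a $2$-vertex path with $\MinHFunc=2$ and $n'-k=2-k$, so $k=1$ becomes NO. More generally, removing a simplicial vertex can leave $\MinHFunc$ unchanged while $n$ drops, so the dual gap $n-\MinHFunc$ is not monotone under such deletions. Without a replacement rule that provably tracks $n-\MinHFunc$, the reduction to an $\Oh(k^2)$-vertex instance does not go through, and the dichotomy ``many independent simplicial vertices $\Rightarrow$ YES'' is asserted rather than proved. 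The paper's separator-based case split sidesteps all of this.
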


Lastly, we consider the parameterized complexity of \DualMinHProb on 
graphs of bounded treewidth.
We observe that this property is expressible in MSO$_2$ and therefore FPT by $k$ plus the treewidth of the input graph.
As by-products, we observe the same for 
\MinLLTProb and \MaxLLTProb,
which ask for DFS trees with at most and at least $k$ leaves, respectively.
\begin{theorem}\label{thm:mso}
    \DualMinHProb, \MinLLTProb, and \MaxLLTProb parameterized by $k$ plus the treewidth of the input graph are fixed-parameter tractable.
\end{theorem}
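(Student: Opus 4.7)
The plan is to express each of the three properties as an MSO$_2$ formula of size bounded by a function of $k$, and then to apply Courcelle's theorem~\cite{Courcelle90} on graphs of bounded treewidth. Because MSO$_2$ allows quantification over both vertex and edge sets, one can existentially guess a spanning tree $E_T \subseteq E(G)$ together with a root $r \in V(G)$, and, in the scope of these quantifiers, define an ancestor predicate $\mathrm{Anc}(u,v)$ which holds iff $u$ lies on the $E_T$-path from $r$ to $v$. This in turn reduces to the MSO$_2$-definable connectivity predicate: $\mathrm{Anc}(u,v)$ holds iff $u \in \{r, v\}$ or deleting $u$ separates $r$ from $v$ in the subgraph $(V(G), E_T)$. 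The condition that $(E_T, r)$ is a DFS tree of $G$ is then captured by the absence of cross edges, namely $\forall xy \in E(G) \setminus E_T: \mathrm{Anc}(x,y) \vee \mathrm{Anc}(y,x)$, and has constant formula size.

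Given the DFS-tree predicate, the constraints on the number of leaves are immediate. A vertex $v$ is a leaf of $T$ iff $v \neq r$ and $v$ is incident to exactly one edge of $E_T$; then ``at least $k$ leaves'' is expressed by the existence of $k$ pairwise distinct leaves, while ``at most $k$ leaves'' is the negation of the analogous formula with $k+1$ variables. Each has formula size $O(k)$, which handles \MinLLTProb and \MaxLLTProb.

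The main subtlety is the height constraint in \DualMinHProb, since we cannot naively forbid chains of length $n-k$ in a formula of size depending only on $k$, as $n$ is not bounded by the parameter. The remedy is to dualize: the height of $T$ is at most $n-k$ if and only if every chain in the ancestor-descendant order of $T$ leaves at least $k-1$ vertices outside it. This holds because a longest chain in $T$ coincides with a longest root-to-leaf path and has size $\mathrm{height}(T) + 1$; hence ``max chain size $\leq n-k+1$'' is equivalent to ``every chain misses at least $k-1$ vertices''. The latter is expressible as
\[
\forall C \subseteq V(G): \mathrm{Chain}(C) \to \exists \text{ pairwise distinct } v_1, \ldots, v_{k-1} \in V(G) \setminus C,
\]
where $\mathrm{Chain}(C) := \forall u, v \in C: \mathrm{Anc}(u,v) \vee \mathrm{Anc}(v,u)$, yielding a formula of size $O(k)$.

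Combining the DFS-tree predicate with the respective property formulas and applying Courcelle's theorem then yields algorithms running in $f(k, \tw) \cdot n$ time for all three problems on graphs of treewidth $\tw$, establishing the claim.
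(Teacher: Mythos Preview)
Your approach is correct and essentially the same as the paper's: construct an \MSOtwo predicate for ``is a rooted DFS tree'' via an ancestor relation, dualize the height bound for \DualMinHProb by counting vertices \emph{outside} a root-to-leaf path/chain, count degree-one vertices for the leaf problems, and invoke Courcelle's theorem. Two minor slips to fix: with the paper's convention that a leaf has height~$1$ (so the longest root--leaf path has exactly $\mathrm{height}(T)$ vertices), the correct dual condition is that every chain misses at least $k$ vertices, not $k-1$; and asserting that $k$ (or $k-1$) first-order variables are pairwise distinct costs $\Theta(k^2)$ atoms, so your formula sizes are $\Oh(k^2)$ rather than $\Oh(k)$, matching the paper.
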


\subparagraphstar{Related work.}
On the complexity of finding DFS trees with many or few leaves, Sam et al.~\cite{SamFRG23} proved that the \textsc{MinLLT} and \textsc{MaxLLT} problems are NP-complete. When parameterized by $k$, they showed that \textsc{MinLLT} is para-NP-hard and \textsc{MaxLLT} is W[1]-hard. By an application of Courcelle's theorem, they demonstrated that the parametric duals of these problems, namely, \textsc{DualMinLLT} and \textsc{DualMaxLLT}, respectively, are both FPT parameterized by $k$. Subsequently, it was shown in \cite{SamBPN23} that both problems admit polynomial kernels with $\Oh(k^3)$ vertices, which implies FPT algorithms running in  $k^{\Oh(k)}\cdot n^{O(1)}$ time. The authors also gave polynomial kernels for \textsc{Min-LLT} and \textsc{Max-LLT} parameterized by the vertex cover number. 

There has been extensive study on the computation of the treedepth of graphs~\cite{HansBod1998,HansBodGHT1995,FominGS2013,kobayashiTamaki2016}.
Regarding the complexity of \textsc{Treedepth} parameterized by $k$, Ossona de Mendez and Nešetřil showed by Courcelle's theorem that the problem is FPT parameterized by $k$. Later, Reidl, Rossmanith, Sanchez Villaamil, and Sikdar \cite{Reidl14} gave an algorithm that solves this problem explicitly in time $2^{\Oh(k\cdot\mathsf{tw}(G))} \cdot n$ and $2^{\Oh(k^2)} \cdot n$ via dynamic programming on tree decompositions (which can be $2$-approximated in $2^{\Oh(k)}\cdot n$ time~\cite{Korhonen21}). This algorithm improves on a previous algorithm by Bodlaender~\cite{HansBod1998} that runs time in $2^{\Oh(k\cdot\mathsf{tw}(G)^2)} \cdot n^2$ time.
Recently Nadara et al. \cite{Nadara22} presented an algorithm that computes treedepth in $2^{\Oh(k^2)} \cdot n^{\Oh(1)}$ time and $n^{\Oh(1)}$ space, improving on the algorithm by Reidl et al., which uses exponential space.

\subparagraphstar{Overview and methods.}
In Section \ref{sectn:Np-hard}, we show that \MinHProb remains NP-hard on chordal graphs via a polynomial-time reduction from the strongly NP-hard \textsc{Weighted Treedepth} problem~\cite{ipl2006} to \MinHProb on chordal graphs.

In Section \ref{sectn:DualMinhlt} we present an algorithm that, exploiting the structure provided by the clique tree of the chordal graph checks whether $\MinHFunc(G) \le n-k$ in $\Oh(n^2) + 2^{\Oh(k^2)}$ time and $\Oh(n^2)$ space. 

In \Cref{sectn:faster-MinLT-fpt-algo} we present an algorithm that, given a graph $G$ and a positive integer $k$, checks if $\MinHFunc(G) \le k$ in $k^{\Oh(k)} \cdot n$ time.  It relies on the fact that if $G$ is a YES-instance, then $\mathsf{tw}(G) \le \mathsf{td}(G) \le k$. It uses a tree decomposition of width $2k$ obtained by the $2$-approximation algorithm described in~\cite{Korhonen21} and adapts the strategy for computing $\mathsf{td}(G)$ by Reidl, Rossmanith, Sanchez Villaamil, and Sikdar \cite{Reidl14}. However, notably, we are able to improve over the natural adaptation of that algorithm that would lead to a $2^{\Oh(k^2)}$ time complexity.

In Section \ref{sectn:courcelle-dualhlt}
we show that the \DualMinHProb problem parameterized by $k$ plus treewidth is FPT on general graphs. The proof is based on applying Courcelle's theorem  \cite{Courcelle90}. Thus, we express the property of $\mathsf{MinHLT}(G) \le n-k$ in \textit{monadic second-order logic}. As a by-product, we obtain that the \textsc{MinLLT} and \textsc{MaxLLT} problems initiated by Sam et al.~\cite{SamFRG23} are also FPT parameterized by $k$ plus treewidth.

\section{Preliminaries}

We consider simple, finite, undirected, and connected graphs. For a graph $G$ with the vertex set $V(G)$ and the edge set $E(G)$, we use $n$ and $m$ to denote the number of vertices $|V(G)|$ and the number of edges $|E(G)|$ of $G$, respectively, if this does not cause any confusion. 

A graph $G$ is a \emph{subgraph} of $H$,
written $G \subseteq H$, 
if $V(G) \subseteq V(H)$ and $E(G) \subseteq E(H)$.
A graph $G$ is an \emph{induced subgraph} of $H$ 
if $V(G) = S \subseteq V(H)$ 
and $E(G) = \{uv \mid uv \in E(G), \{u, v\} \subseteq S\}$.
In this case we say that $G$ is the \emph{subgraph of $H$ induced by $S$}, written $G = H[S]$.

Let $G$ be a graph.
A sequence $v_1, \ldots, v_r$ of vertices of $G$ is a \emph{walk from $v_1$ to $v_r$} 
if for all $i \in [r-1]$, $v_i v_{i+1} \in E(G)$.
We say that $v_1, \ldots, v_r$ is a \emph{path} if all vertices are pairwise distinct,
and a \emph{cycle} if $v_1, \ldots, v_{r-1}$ are pairwise distinct and $v_r = v_1$.
A graph $G$ is \emph{connected} if there is a path from any vertex of $G$ to any other vertex.
A \emph{(connected) component} of $G$ is a maximal connected subgraph.
We may denote connected components as vertex sets when no confusion is possible.

A graph is a \emph{forest} if it does not contain a cycle and a connected forest is a \emph{tree}.
A tree $T$ is a \emph{spanning tree of $G$} if $T \subseteq V(G)$ and $V(T) = V(G)$.
A tree $T$ is \emph{rooted} if it has a unique designated vertex $r \in V(T)$, called the \emph{root}, 
inducing an ancestral relationship on some pairs of vertices in $T$:
We say that $u$ is an \emph{ancestor} of $v$ if $u$ lies on the path from $r$ to $v$ in $T$.
In this case we also call $v$ a \emph{descendant} of $u$.
If either $u$ is an ancestor of $v$ or $v$ is an ancestor of $u$, we say that $u$ and $v$ are in an \emph{ancestral relationship}, or \emph{comparable in $T$}.
A path $P$ in $T$ is \emph{vertical} if all pairs of vertices on it are comparable.
If $u$ is an ancestor of $v$ and $uv \in E(T)$,
then $u$ is the \emph{parent} of $v$ and $v$ the \emph{child} of $u$.
A \emph{rooted subtree} of a rooted tree $T$ consists of a vertex $v \in V(T)$ along with all of its descendants.
A forest $F$ where all components are rooted trees is called a \emph{rooted forest}.
We denote the set of roots of $F$ by $\roots(F)$.
(With slight abuse of notation, we may treat $F$ as a regular, unrooted, forest whenever convenient.)

Let $T$ be a rooted tree.
If $v$ is a leaf, then its \emph{height} is~$1$.
Otherwise, the \emph{height of $v$} is the maximum height of any of its children plus~$1$.
The \emph{height of a rooted tree} is the height of its root.
The \emph{depth} of the root is $1$, 
and the depth of any non-root node is the depth of its parent plus~$1$.

\paragraphstar{DFS trees.}
Let $G$ be a connected graph and $T \subseteq G$ a spanning tree of $G$.
We say that an edge $uv \in E(G)$ is a \emph{cross-edge for $T$} if $u$ and $v$ are not comparable in $T$.
We say that $T$ is a \emph{DFS tree of $G$} if $G$ has no cross-edges for $T$.

\paragraphstar{Tree Decompositions.}
\begin{definition}\label{def:td}
    Let $G$ be a graph.
    A \emph{tree decomposition} of $G$ is a pair $(T, \bags)$ 
    where $T$ is a tree and $\bags \colon V(T) \to 2^{V(G)}$ is such that:
    \begin{enumerate}
        \item $\bigcup_{t \in V(T)} \bags(t) = V(G)$.
        \item For all $uv \in E(G)$, there is some $t \in V(T)$ such that $\{u, v\} \subseteq \bags(t)$.
        \item For all $v \in V(G)$: $T[\{t \mid v \in \bags(t)\}]$ is connected.
    \end{enumerate}
    For each $t \in V(T)$, we call the set $\bags(t)$ the \emph{bag at $t$}.
    The \emph{width} of a tree decomposition $(T, \bags)$ is $\max_{t \in V(T)} |\bags(t)| - 1$.
\end{definition}

\begin{definition}\label{def:nice:td}
    Let $G$ be a graph and $(T, \bags)$ be a tree decomposition of $G$.
    We say that $G$ is a \emph{nice tree decomposition} 
    if $T$ is a rooted binary tree and each node $t \in V(T)$ is one of the following types.
    \begin{description}
        \item[Introduce Vertex.] The node $t$ either has one child $t'$ and $\bags(t) = \bags(t') \cup \{v\}$ for some vertex $v \notin \bags(t')$ or $t$ is a leaf and $\bags(t) = \{v\}$.
        We say that $v$ is \emph{introduced at $t$}.
        \item[Forget Vertex.] The node $t$ has one child $t'$ and $\bags(t') = \bags(t) \cup \{v\}$ for some $v \notin \bags(t)$.
        We say that $v$ and all edges incident to it are \emph{forgotten at $t$}.
        \item[Join.] The node $t$ has two children $t_1$ and $t_2$ with $\bags(t) = \bags(t_1) = \bags(t_2)$.
    \end{description}
    For each $t \in V(T)$, we denote by $G_t$ the subgraph of $G$ induced on all vertices that were introduced at $t$ or below.
\end{definition}

In a nice tree decomposition we may assume that each vertex is forgotten exactly once (but may be introduced multiple times), and that the number of nodes is $w^{\Oh(1)}n$, where~$w$ is its width and~$n$ the number of vertices of the underlying graph~\cite{CyganEtAl15}.

\paragraphstar{Treedepth decompositions.}

\begin{definition}
Let $G$ be a graph. A treedepth decomposition of $G$ is a rooted forest $F$ such that $V(F) = V(G)$ and for every edge $uv \in E(G)$, we have that either $u$ is the ancestor of $v$ in $F$ or $v$ is an ancestor of $u$.
\end{definition}

We note that the definition of a treedepth decomposition is very similar to the definition of DFS tree with one notable difference --- DFS trees have to additionally satisfy that edges of $F$ are also edges of $G$. The treedepth of $G$, denoted by $\td(G)$, is the lowest height a treedepth decomposition of $G$ might have. As every DFS tree of $G$ is also a treedepth decomposition of $G$, we get that $\td(G) \le \MinHFunc(G)$. It is known that $\MinHFunc(G) \le 2^{\td(G)}$, as treedepth of a path on $2^k$ vertices equals $k+1$ \cite{Nesetril2012} and $\MinHFunc(G)$ is at most the size of the longest path in $G$, so these two parameters are functionally equivalent. 

\paragraphstar{Chordal graphs.}
A graph $G$ is \emph{chordal} if it has no cycle of length four or more as an induced subgraph.
It is well-known that $G$ is chordal if and only if it has a tree decomposition where all bags are maximal cliques.

\section{\MinHProb is NP-hard on chordal graphs} \label{sectn:Np-hard}

In this section we will show the following theorem:

\begin{theorem} \label{thm:NP-hard-chordal}
The \MinHProb problem is NP-hard on chordal graphs. 
\end{theorem}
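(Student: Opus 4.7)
The plan is to reduce from the strongly NP-hard \textsc{Weighted Treedepth} problem~\cite{ipl2006}, whose instances may be assumed to have polynomially bounded vertex weights. Starting from $(H,w,k)$ with $H$ chordal (which suffices for the source hardness), I construct in polynomial time a chordal graph $G$ and a threshold $k'$ such that $(H,w)$ admits a weighted treedepth decomposition of weight at most $k$ if and only if $\MinHFunc(G) \le k'$. The natural starting point is the vertex blow-up: replace each $v \in V(H)$ by a clique $K_v$ on $w(v)$ fresh vertices, and for every edge $uv \in E(H)$ add all edges between $K_u$ and $K_v$, so that $K_u \cup K_v$ becomes a clique of $G$. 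Since the weights are polynomial, $G$ has polynomial size.

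First I would verify chordality of $G$: any induced cycle of length at least four in $G$ must, after projecting each vertex to its host clique, yield an induced cycle of length at least four in $H$ (a cycle within a single $K_v$ is impossible, as $K_v$ is a clique), contradicting chordality of $H$. For the easier direction of correctness, given a DFS tree $D$ of $G$, each $K_v$ and each $K_u \cup K_v$ with $uv \in E(H)$ is a clique, so its vertices are pairwise comparable and form a vertical chain in $D$. Contracting each $K_v$-chain to the node $v$ produces a rooted tree $F$ on $V(H)$ that is a weighted treedepth decomposition of $H$, and the length of every root-to-leaf path in $D$ equals the weighted length of the corresponding root-to-leaf path in $F$; hence $\mathsf{wtd}(H,w) \le \MinHFunc(G)$.

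The main obstacle is the converse: realizing an optimal weighted treedepth decomposition $F$ of $(H,w)$ as a DFS tree of $G$ of the same height. The natural construction instantiates each $K_v$ as a vertical chain of length $w(v)$ and arranges the chains along $F$, but a parent-child pair $(\mathrm{parent}_F(v),v)$ in $F$ need not correspond to an edge of $H$, in which case the linking tree edge between chains $K_{\mathrm{parent}_F(v)}$ and $K_v$ is absent from $G$. As a sanity check, if $H$ is the path $P_5$ with unit weights, then $G = P_5$ while $\mathsf{wtd}(H,w) = 3 < 5 = \MinHFunc(G)$, showing that the naive blow-up is too restrictive on its own. To repair this, I would either prove a normal-form lemma asserting that an optimal $F$ of $(H,w)$ can always be chosen so that every parent-child pair in $F$ is an $H$-edge (exploiting chordality of $H$ and local rerooting), or enrich the construction with edge-wise connector gadgets that preserve chordality while allowing flexible DFS transitions without inflating the height beyond a controlled shift absorbed into $k'$. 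Calibrating this step, and tightly matching $\MinHFunc(G)$ to $\mathsf{wtd}(H,w)$, is the crux of the proof.
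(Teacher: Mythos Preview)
Your overall framework—reduce from \textsc{Weighted Treedepth} via a vertex blow-up and argue chordality is preserved—matches the paper, and your easy direction is fine. But the hard direction, which you correctly flag as the crux, has a genuine gap.

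First, your sanity check is miscomputed. For $H=P_5$ with unit weights the blow-up is $G=P_5$; since $P_5$ is a tree its only spanning tree is itself, and rooted at the middle vertex it has height~$3$, so $\MinHFunc(P_5)=3=\td(P_5)$. A correct witness to the gap is $P_8$, where $\td(P_8)=4$ but $\MinHFunc(P_8)=5$.

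More importantly, neither of your proposed repairs works. The normal-form lemma is false: if every chordal $H$ admitted an optimal treedepth decomposition whose parent–child pairs are all $H$-edges, then $\td(H)=\MinHFunc(H)$ for all chordal $H$, which long paths (themselves chordal) already refute. Your second suggestion, ``connector gadgets with a controlled additive shift absorbed into $k'$'', is too vague to constitute a proof and does not point to a concrete mechanism.

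The idea you are missing is \emph{scaling}. The paper reduces from \textsc{Weighted Treedepth} on \emph{trees} and blows each vertex $v$ up to a clique of size $M\cdot w(v)$ with $M=n^2+1$. The heart of the argument is a sandwich inequality
\[
M\cdot\td(T,w)\ \le\ \MinHFunc(G_M(T,w))\ \le\ M\cdot\td(T,w)+n^2.
\]
The lower bound is your easy direction. For the upper bound, to simulate a non-edge parent–child step $x\to y$ of an optimal (recursively optimal) treedepth decomposition, the DFS walks along the unique $T$-path from $x$ to $y$, spending one vertex from each intermediate blown-up clique; this is safe because every clique has at least $M\ge n$ vertices, so one can always afford the detour. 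The total detour cost over the whole recursion is at most $n^2$. Choosing $M>n^2$ makes this overhead invisible at the resolution of multiples of $M$, so $\td(T,w)=\lfloor \MinHFunc(G_M(T,w))/M\rfloor$, and no exact equality between $\MinHFunc$ and weighted treedepth is ever needed.
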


Subsequently, we deduce that the \DualMinHProb is also NP-hard on chordal graphs.

We will reduce from the problem known as \ProblemName{Weighted Treedepth} on trees. In the \ProblemName{Weighted Treedepth} problem we are given a graph with a positive weight $w(v)$ on each vertex $v$ and an integer $k$ and we ask whether there exists a treedepth decomposition of $G$ such that on any vertical path, the sum of the weights of the vertices on it is at most $k$. 
The lowest value of $k$ for which such treedepth decomposition exists will be denoted by $\td(G, w)$.

Dereniowski and Nadolski proved the following:
\begin{theorem}[Dereniowski and Nadolski~\cite{ipl2006}]\label{thm:wtdisHard}
    {\sc Weighted Treedepth} with positive integral weights is strongly NP-hard on trees. 
\end{theorem}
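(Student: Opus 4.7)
The natural route is a polynomial-time reduction from a strongly NP-hard number problem. The canonical candidate is \textsc{3-Partition}: given $3m$ positive integers $a_1,\ldots,a_{3m}$ with $B/4 < a_i < B/2$ and $\sum_i a_i = mB$, decide whether they can be partitioned into $m$ triples each summing to $B$. Because the weights produced in the reduction will be polynomially bounded in the $a_i$ and $m$, a correct reduction immediately yields \emph{strong} NP-hardness.

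The plan is to encode the bins as a tree skeleton. Take a central anchor vertex $c$ of large weight $W$, and attach to $c$ a ``bin subtree'' for each $j \in [m]$ containing a heavy guard vertex of weight $W_j$, chosen so that any \ProblemName{Weighted Treedepth} decomposition of height at most a carefully selected $k$ is forced to be rooted at $c$ and to place the guard of the $j$-th bin subtree at depth~$2$. The item values $a_1,\ldots,a_{3m}$ are then realized as further vertices in $T$ via a gadget designed so that any such decomposition slots each item as a descendant of exactly one bin-guard. Setting $k$ so that the budget left below each bin-guard equals exactly $B$ then forces the items placed under bin $j$ to have weight sum at most $B$; combined with $\sum_i a_i = mB$ this is tight, and $B/4 < a_i < B/2$ forces exactly three items per bin. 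From a YES \textsc{3-Partition} instance one can then build a decomposition of height exactly $k$ in the canonical shape described above.

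The hard part, and the main obstacle, is designing the gadget that implements this forcing. Since an item must correspond to actual vertices of the tree $T$ (and $T$ must remain acyclic), one cannot literally attach each item to every bin. The gadget must instead ensure that \emph{every} decomposition of height at most $k$ has the canonical shape --- rooted at $c$, bin-guards at depth~$2$, items scattered below --- and that the items are genuinely distributed among bins rather than ``shared'' via some clever alternative decomposition. This typically requires additional heavy vertices whose role is to rule out alternative rootings and to prevent two guards or two items from lying on the same vertical path in non-canonical ways. Once such a gadget is in place, correctness in both directions reduces to a fairly mechanical analysis of vertical paths in the decomposition, but verifying that no off-pattern decomposition can sneak under the budget $k$ is where the bulk of the technical work lies, and it is what makes this reduction subtler than the corresponding reductions for weighted path or tree-width parameters.
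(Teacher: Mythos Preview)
The paper does not prove this theorem at all: it is quoted verbatim from Dereniowski and Nadolski~\cite{ipl2006} and used as a black box in the reduction of \Cref{sectn:Np-hard}. So there is no ``paper's own proof'' to compare against.

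As for your proposal itself, it is a plan rather than a proof, and you say so explicitly: the gadget that does all the work is never constructed. The high-level route (reduce from \textsc{3-Partition}, anchor vertex, bin guards, item vertices) is the natural one, but the naive instantiation you sketch does not force the intended behaviour. Concretely, if the items are realised as vertices adjacent only to the anchor $c$, then in a treedepth decomposition rooted at $c$ with the guards at depth~$2$ you are free to place the items assigned to bin~$j$ as \emph{siblings} below the $j$-th guard rather than as a chain. Every root-to-leaf path then picks up $W + W_j + a_i$ for a single item~$i$, not the sum over the bin, and the budget constraint degenerates to $\max_i a_i \le B$ instead of $\sum_{i \in \text{bin } j} a_i \le B$. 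Getting the sum to appear on a vertical path requires the items to be forced into an ancestor--descendant relation, which in turn requires edges among them in $T$; but then $T$ fixes their relative positions and you lose the freedom to distribute them among bins. Resolving this tension is exactly the content of the Dereniowski--Nadolski construction, and your write-up stops at the point where that construction would have to begin.
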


Given an instance $(T,w)$ of \ProblemName{Weighted Treedepth}, where $T$ is a tree, and a positive integer $M$, we construct a blown-up version of $T$ --- the graph $G_M(T,w)$ --- as follows. For each $x\in V(T)$ we create a set $V_x$ of size $w(x)\cdot M$ that induces a clique in $G_M(T,w)$. For each edge $xy\in E(T)$ we add all possible edges between the vertices of $V_x$ and $V_y$ in $G_M(T,w)$. A similar construction (with $M=1$) has been used in~\cite{ipl2006}, where the authors show the obtained graph is indeed chordal. 
They also proved the following.
\begin{lemma}[Dereniowski and Nadolski~\cite{ipl2006}]\label{lemma:tdTw}
$\td(G_1(T,w))=\td(T,w)$
\end{lemma}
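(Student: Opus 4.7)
The plan is to establish $\td(G_1(T,w)) = \td(T,w)$ by proving both inequalities. The easy direction, $\td(G_1(T,w)) \le \td(T,w)$, is handled by a ``blow-up'' construction: given an optimal weighted treedepth decomposition $F$ of $(T,w)$, I construct a rooted forest $F'$ on $V(G_1(T,w))$ by replacing each $x \in V(F)$ with a vertical chain containing the $w(x)$ vertices of $V_x$, preserving the parent-child relationships of $F$ between these chains. The height of $F'$ then equals the weighted height of $F$. For validity, clique edges inside each $V_x$ are covered because $V_x$ sits on a single vertical path of $F'$, while for each edge $xy\in E(T)$ one of $x,y$ is an ancestor of the other in $F$, so the $V_x$- and $V_y$-chains are stacked along a common vertical path in $F'$, making every pair in $V_x \times V_y$ comparable.

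The reverse inequality $\td(T,w) \le \td(G_1(T,w))$ I would prove by induction on $\sum_{x \in V(T)} w(x) = |V(G_1(T,w))|$, strengthening the statement to weighted forests so that components arising in the recursion are handled uniformly. Let $H := G_1(T,w)$ and let $F'$ be an optimal treedepth decomposition of $H$ of height $h$, rooted (in one of its trees) at some $v^* \in V_{x^*}$. Since all vertices of $V_{x^*}$ are true twins in $H$, we have $H - v^* \cong G_1(T, w')$ where $w'(x^*) = w(x^*) - 1$ and $w'(y) = w(y)$ for $y \ne x^*$ (with $w'(x^*) = 0$ corresponding to deleting $x^*$ from $T$, turning it into a forest). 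By the inductive hypothesis, $\td(H - v^*) = \td(T, w')$, and therefore $h = 1 + \td(T, w')$.

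The main obstacle is to compare $\td(T,w)$ with $\td(T,w')$, because the optimal root in the standard treedepth recursion for these two weightings need not coincide. Let $y^*$ attain the minimum in the recursion for $\td(T,w')$. If $y^* = x^*$, then $\td(T,w') = (w(x^*)-1) + \max_C \td(C, w|_C)$ over the components $C$ of $T - x^*$, giving $h = w(x^*) + \max_C \td(C, w|_C) \ge \td(T,w)$ directly. If instead $y^* \ne x^*$, the key observation is $\td(C, w|_C) \le \td(C, w'|_C) + 1$ for every component $C$ of $T - y^*$: since $w$ and $w'$ differ by exactly one at the single vertex $x^*$, any weighted decomposition witnessing $\td(C, w'|_C)$ can be reused for $(C, w|_C)$ at the cost of at most one extra unit along ancestral paths through $x^*$. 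Combining, $\td(T, w) \le w(y^*) + \max_C \td(C, w|_C) \le w(y^*) + \max_C \td(C, w'|_C) + 1 = \td(T, w') + 1 = h$, completing the induction.
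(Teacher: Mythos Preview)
The paper does not prove this lemma; it is quoted from Dereniowski and Nadolski~\cite{ipl2006} and used as a black box, so there is no in-paper argument to compare against.

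Your argument is essentially correct and self-contained, with one minor slip worth flagging. From the optimal decomposition $F'$ of $H$ rooted at $v^*$, deleting the root only yields $\td(H - v^*) \le h - 1$, hence $h \ge 1 + \td(T,w')$, not the equality you write. Fortunately only the inequality is used downstream, so nothing breaks. The two-case analysis on $y^*$ is sound: in Case~2 the key monotonicity $\td(C,w|_C)\le \td(C,w'|_C)+1$ follows because $w$ and $w'$ differ by one at a single vertex, and combining with $w(y^*)=w'(y^*)$ gives $\td(T,w)\le \td(T,w')+1\le h$ as desired. The boundary situation $w(x^*)=1$ (so $w'(x^*)=0$) is indeed delicate since then $(T,w')$ collapses to the forest $T-x^*$; your stated strengthening to weighted forests is the right way to absorb this, and in fact that case is immediate anyway since rooting $(T,w)$ at $x^*$ directly gives $\td(T,w)\le 1+\td(T-x^*,w)=1+\td(T,w')\le h$.
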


We show the following, for a positive integer $M$.
\begin{observation}\label{obs:tdMTw}
$\td(G_M(T,w))=M\cdot\td(T,w)$
\end{observation}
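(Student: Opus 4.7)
The plan is to reduce the statement to the already established \Cref{lemma:tdTw} by noting that the blow-up construction with parameter $M$ is just the unit blow-up of a rescaled weight function, and then observe that rescaling weights rescales the weighted treedepth linearly.

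First I would make the following identification: if we define $w_M \colon V(T) \to \mathbb{Z}_{>0}$ by $w_M(x) = M \cdot w(x)$, then by construction $G_M(T,w)$ and $G_1(T, w_M)$ are literally the same graph, because in both cases each $x \in V(T)$ is blown up to a clique of size $M \cdot w(x)$ with all edges present between cliques corresponding to adjacent vertices of $T$. Applying \Cref{lemma:tdTw} to $(T, w_M)$ then yields
\[
\td(G_M(T,w)) \;=\; \td(G_1(T, w_M)) \;=\; \td(T, w_M).
\]

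Next I would show that $\td(T, w_M) = M \cdot \td(T, w)$. For the upper bound, take any rooted forest $F$ on $V(T)$ witnessing $\td(T,w) = k$; since every vertical path in $F$ has $w$-weight at most $k$, the same $F$ has $w_M$-weight at most $Mk$ on every vertical path, giving $\td(T, w_M) \le M \cdot \td(T,w)$. For the lower bound, take $F$ witnessing $\td(T, w_M) = k'$; since $w_M = M \cdot w$, every vertical path has $w$-weight exactly a $1/M$-fraction of its $w_M$-weight, hence at most $k'/M$. Because all $w_M$-values and hence all $w_M$-path weights are multiples of $M$, $k'$ is itself a multiple of $M$, so $\td(T,w) \le k'/M$, i.e.\ $M \cdot \td(T,w) \le \td(T, w_M)$. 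Combining the two inequalities gives equality.

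Chaining the two displayed equalities yields $\td(G_M(T,w)) = M \cdot \td(T,w)$, as required. I do not expect a genuine obstacle here; the only thing to be slightly careful about is the divisibility argument in the lower bound, which is needed so that the integrality of weighted treedepth on the $w$ side is preserved when dividing by $M$.
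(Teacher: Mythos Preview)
Your proposal is correct and follows essentially the same route as the paper: define $w_M = M\cdot w$, identify $G_M(T,w)$ with $G_1(T,w_M)$, apply \Cref{lemma:tdTw}, and use the linear rescaling $\td(T,w_M)=M\cdot\td(T,w)$. The only difference is that the paper asserts the rescaling identity without justification, whereas you spell it out; your divisibility remark is harmless but unnecessary, since $\td(T,w)\le k'/M$ already gives $M\cdot\td(T,w)\le k'$ directly.
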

\begin{proof}
Let $w_M$ be the function such that for every $v\in V(T)$, $w_M(v)=M\cdot w(v)$. By Lemma~\ref{lemma:tdTw}, $\td(G_1(T,w_M))=\td(T,w_M)$. Note that for any tree $T$, $\td(T,w_M)=M\cdot \td(T,w)$. Furthermore, note also that by construction, $G_1(T,w_M)$ is isomorphic to $G_M(T,w)$. Hence the observation follows.
\end{proof}

We base our proof on the following lemma:
\begin{lemma}\label{lem:sandwich}
    Let $(T,w)$ be an instance of {\sc Weighted Treedepth} where $T$ is a tree with $n$ vertices and let $M\in \mathbb{Z}^+$ be such that $M\geq n$. Then 
    $$M\cdot\td(T,w)\leq \MinHFunc(G_M(T,w)) \leq M\cdot \td(T,w) + n^2.$$
\end{lemma}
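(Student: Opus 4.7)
The plan is to handle the two inequalities of the claimed bound separately. The lower bound $M \cdot \td(T,w) \leq \MinHFunc(G_M(T,w))$ is immediate: every DFS tree is also a treedepth decomposition (as noted in the preliminaries), so $\MinHFunc(G_M(T,w)) \geq \td(G_M(T,w))$, and the right-hand side equals $M \cdot \td(T,w)$ by Observation~\ref{obs:tdMTw}.

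For the upper bound, I will construct an explicit DFS tree of $G_M(T,w)$. Fix an optimal weighted treedepth decomposition $F$ of $(T,w)$ of height $k = \td(T,w)$, rooted at some $\rho \in V(T)$. The DFS starts at an arbitrary vertex of $V_\rho$ and traverses $V_\rho$ as a path; then, recursively on the $F$-children of $\rho$, for each such child $c$, the DFS descends along the $T$-path from $\rho$ to $c$ by using exactly one \emph{bridge} vertex in each interior clique $V_z$ along the way to $V_c$, and thereafter traverses $V_c$'s main chunk as a path before recursing on $c$'s $F$-subtree. If the same clique $V_z$ is interior to several $F$-edges, we reserve $k_z$ distinct vertices of $V_z$ for these bridges and place the remaining $Mw(z) - k_z$ vertices as $V_z$'s main chunk at its natural $F$-position (attached, alongside the main chunks of the other $F$-children of $p_F(z)$, to the trunk end of $V_{p_F(z)}$).

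To bound the depth, observe that the last vertex of any $V_x$'s main chunk sits at depth $M \cdot W_x + G_x - K_x$, where $W_x = \sum_{y \in \mathrm{anc}_F(x)} w(y) \leq k$, $G_x$ is the sum of gaps $g_y = \mathrm{dist}_T(p_F(y), y) - 1$ along the $F$-path to $x$, and $K_x \geq 0$ counts the bridges used among $x$'s $F$-ancestors. Since an $F$-path contains at most $n - 1$ edges and each gap is at most $n - 2$, one has $G_x \leq (n-1)(n-2) < n^2$, giving a depth bound of $Mk + n^2$. The assumption $M \geq n$ ensures that each clique has enough vertices to supply the required bridges, since $k_z \leq n - 1 < n \leq M \leq Mw(z)$.

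The main obstacle is verifying that the scheme is realized by an actual DFS of $G_M(T,w)$. Two checks are required. First, every tree edge must be an edge of $G_M(T,w)$: this holds by construction, since bridges follow $T$-paths and $V_y \cup V_{y'}$ is a clique in $G_M$ whenever $y y' \in E(T)$, so consecutive vertices along the DFS trunk are always adjacent. Second, the prescribed neighbor-recursion order at each step must be consistent, which follows because at the trunk end of $V_{p_F(z)}$, each child-subtree (including the leftover main chunk of any intermediate $V_z$ reserved earlier) still has its root vertex unvisited and adjacent to the trunk end in $G_M$, so it can be picked up as a further child once the previously recursed subtree has been exhausted.
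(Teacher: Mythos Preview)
Your construction is essentially the paper's: start from an optimal weighted treedepth decomposition of $T$, traverse the clique $V_v$ of the current root as a path, and for each child $c$ walk along the $T$-path from $v$ to $c$ spending one ``bridge'' vertex per intermediate clique before recursing. The paper phrases this as a recursive procedure with explicit invariants and proves the height bound by induction on $|S|$, while you give a global description and a closed-form depth formula $M\cdot W_x + G_x - K_x$; the two analyses match up.

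There is, however, one genuine omission. You fix an \emph{arbitrary} optimal decomposition $F$, but the construction only yields a DFS tree if every rooted subtree of $F$ induces a \emph{connected} subgraph of $T$. The paper makes this explicit by taking $D$ to be \emph{recursively optimal} and notes (citing \cite{DynamicTreedepth}) that this forces connectedness of each rooted subtree. Without that assumption, an interior vertex $z$ of the $T$-path from $p_F(c)$ to $c$ need not lie in the $F$-subtree rooted at $c$; it could sit in the $F$-subtree of a sibling $c'$. Then the bridge you take from $V_z$ lives in the $c$-branch of your tree while the main chunk of $V_z$ lives in the $c'$-branch, so two vertices of the clique $V_z$ become incomparable and you have a cross-edge. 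Your two stated checks (tree edges are $G_M$-edges; each child-subtree can be picked up) do not rule this out --- neither one verifies the absence of cross-edges, which is the defining property of a DFS tree. Once you add the connectedness hypothesis, all bridges of $V_z$ provably lie on the root-to-main-chunk path of $V_z$ and the cross-edge argument goes through; your depth bound is then correct as written.
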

\begin{proof}
By Observation \ref{obs:tdMTw}, 
$M\cdot\td(T,w)= \td(G_M(T,w))$. Furthermore, since any DFS tree is also a treedepth decomposition tree, we get that $\td(G_M(T,w))\leq \MinHFunc(G_M(T,w))$. Hence, the first inequality follows.

To prove the second inequality, we proceed as follows. We will define a recursive procedure to construct a DFS tree for $G_M(T,w)$ (shortened as $G$) with height at most $M\cdot \td(T,w) + n^2$. 

Let $D$ be an optimal treedepth decomposition for $(T,w)$. Without loss of generality, we may assume that $D$ is \emph{recursively optimal}, which means that any rooted subtree of $D$ is an optimal treedepth decomposition for the subgraph of $T$ induced on its vertices. 
(This in particular implies that any rooted subtree of $D$ induces a connected subgraph of $T$ \cite{DynamicTreedepth}.) 
Let $r\in V(T)$ be the root of $D$. From now on, we consider the tree $T$ also to be rooted at $r$. We now define the following procedure $\const(S,w_S,v)$, where $S$ is some subgraph of $T$, $w_S$ is a weight function on its vertices and $v$ is a vertex of $S$. This function is meant to output a DFS tree of $G_1(S, w_S)$ whose height is at most $\td(S, w_S) + |S|^2$.
We additionally maintain two invariants that will be satisfied throughout all of our calls of that function. The first invariant is that for any $u \in V(S)$ we have that $w_S(u) \ge |V(S)|$. The second invariant is that $V(S)$
is the set of vertices of the subtree of $D$ rooted at $v$. These invariants are satisfied for parameters $(S, w_S, v) \coloneqq (T, w_M, r)$, as $w_M(u) = M \cdot w(u) \ge n=|V(T)|$ and the subtree of $D$ rooted at $r$, which is its root, is simply $D$ itself and $V(D) = V(T)$.

The procedure call $\const(S, w_S, v)$ proceeds as follows:
\begin{enumerate}
	\item The DFS visits all the vertices $V_v\subseteq V(G)$ in an arbitrary order (we remind that $V_v$ is the clique on $w_S(v)$ vertices corresponding to $v$).

    Note that the connected components of $G_1(S, w_S) \setminus V_v$ correspond to graphs $G_1(C_1, w_{C_1}),$ $\ldots, G_1(C_c, w_{C_c})$, where $C_1, \ldots, C_c$ are the connected components of $S \setminus \{v\}$ and $w_{C_i} \coloneqq w_S|_{C_i}$, so the DFS tree has to branch into each $C_i$ separately.
    Note that $v$ has exactly one child in $D$ per each $C_i$. 
	\item For each $i=1, \ldots, c$, let us do the following. Let $x_i$ be the child of $v$ in $D$ that belongs to $C_i$ and let $P_i$ be a path between $v$ and $x_i$ whose all vertices except for $v$ are in $C_i$. Suppose $P_i=vy_1y_2\ldots y_{p_i}$, with $y_{p_i}=x_i$. For each $1\leq j \le p_i-1$, the DFS visits an arbitrary vertex $z_j\in V_{y_j}$, that is, we add edges $z_0z_1$, $z_1z_2$, \ldots, $z_{p_i-2}z_{p_i-1}$ to the constructed tree, where $z_0$ is the last visited vertex of $V_v$.
	 
    Note that we are allowed to do that as $w_S(y_j) \ge |S| \ge 1$, so $V_{y_j}$ is nonempty. Let $G'_i$ be the graph obtained from $G_1(C_i, w_{C_i})$ by removing, for each $1\leq j\leq p_i - 1$, the vertex $z_j$ from $V_{y_j}$. Let $w'_{C_i}$ be $w_{C_i}$ modified in a way reflecting visiting the vertices $z_1, \ldots, z_{p_i-1}$, that is, 
    $w'_{C_i}(u) = w_{C_i}(u)-1$ 
    for $u \in \{y_1, \ldots, y_{p_i-1}\}$ and $w'_{C_i}(u) = w_{C_i}(u)$ otherwise.
    \item Call $\const(C_i,w'_{C_i},x_i)$ and append the resulting DFS tree as a child subtree of $z_{p-1}$.
 
    Note that our invariants are satisfied for this call as we have $w'_{C_i}(u) \ge w_{C_i}(u) -1 = w_{S}(u) -1 \ge |S|-1 \ge |C_i|$ and $C_i$ is indeed the set of vertices of the subtree of $D$ rooted at $x_i$. Also note that the root of the tree returned by that call belongs to $V_{x_i}$, so it is adjacent to $z_{p_i-1}$, so all edges of the final tree will indeed be edges of $G$.
\end{enumerate}

It is clear from the construction that the tree $F$ produced by the call $\const(T, w_M, r)$ satisfies that $G_M(T, w)$ does not have any cross-edges for $F$. It is also clear that $E(F) \subseteq E(G)$ and $V(F) = V(G)$, hence $F$ is indeed a DFS tree of $G_M(T, w)$. 

What is left to prove is the claimed upper bound on the height of $F$. We do so by the induction on $|S|$, where we claim that $\const(S, w_S, v)$ outputs a DFS tree of $G_1(S, w_S)$ of height at most $\td(S, w_S) + |S|^2$. That statement is obvious for $|S|=1$. For the inductive step we note that since $D$ is recursively optimal and $D[V(S)]$ is a subtree of $D$ rooted at $v$, we have that $\td(S, w_S) = w_S(v) + \max(\td(C_1, w_{C_1}), \ldots, \td(C_c, w_{C_c}))$. 
The final tree produced by the $\const(S, w_S, v)$ call consists of a path on $|V_v| = w_S(v)$ vertices with $c$ different branches attached to its deepest vertex, where the $i$-th branch has height at most $p_i-1 + \td(C_i, w'_{C_i}) + |C_i|^2$ thanks to our inductive hypothesis. However, note that  $\td(C_i, w'_{C_i}) \le \td(C_i, w_{C_i})$ and $p_i-1 + |C_i|^2 \le |S|-2 + (|S|-1)^2 \le |S|^2$, so the height of the returned tree is at most $w_S(v) + \max(\td(C_1, w_{C_1}), \ldots, \td(C_c, w_{C_c})) + |S|^2 = \td(S, w_S) + |S|^2$, as claimed.
\end{proof}

Having proven \Cref{lem:sandwich}, we are ready to prove \Cref{thm:NP-hard-chordal}.

\begin{proof}[Proof of \Cref{thm:NP-hard-chordal}]
    We give a reduction from {\sc Weighted Treedepth} with positive integral weights on trees. By Theorem~\ref{thm:wtdisHard}, this problem is NP-hard even if the weights are polynomial in the size of the tree. 
    Let $(T,w)$ be an instance of {\sc Weighted Treedepth} with positive integral weights that are polynomial in $|V(T)|=n$. Let $M=n^2+1$. As $M$ is polynomial in $n$, the function $w_M(v) = M \cdot w(v)$ stays polynomial in $n$. Consequently, the size of $G_M(T, w)$ is polynomial in $n$ as well. Moreover, $G_M(T, w)$ is a chordal graph~\cite{ipl2006}. 
    Based on \Cref{lem:sandwich}, we have that 
    $$M\cdot\td(T,w)\leq \MinHFunc(G_M(T,w)) \leq M\cdot \td(T,w) + n^2 < M \cdot (\td(T, w) + 1).$$ 
    Therefore, we can conclude that $\td(T, w) = \floor{\frac{\MinHFunc(G_M(T, w))}{M}}$. Hence, we deduce that $\MinHProb$ is NP-hard on chordal graphs.
\end{proof}

\section{\DualMinHProb is FPT on chordal graphs} \label{sectn:DualMinhlt}

This section will prove that \DualMinHProb is FPT on chordal graphs. More formally, we prove the following theorem:
\begin{theorem}\label{thm:fpt-chordal}
Let $G$ be a chordal graph on $n$ vertices and $m$ edges and $k$ be a positive integer. Then, we can check if $\MinHFunc(G) \le n-k$ in $\Oh(n^2) + 2^{\Oh(k^2)}$ time and $\Oh(n^2)$ space. Moreover, if this inequality holds, then we can return a certifying DFS tree in the same time and space complexities.
\end{theorem}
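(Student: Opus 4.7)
The plan is to exploit the clique tree structure of $G$. The first step is to compute a clique tree $T$ of $G$ in $\Oh(n^2)$ time and space, which accounts for both the $\Oh(n^2)$ term in the running time and the stated space bound. The remaining task is then to decide whether $\MinHFunc(G) \le n-k$ by analyzing configurations on $T$ of size $\Oh(k)$ in $2^{\Oh(k^2)}$ time.

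The crucial structural observation is that in any DFS tree $F$ of $G$, each maximal clique of $G$---i.e., each bag of $T$---must lie on a single vertical path of $F$, because any two vertices of a clique are adjacent in $G$ and hence comparable in $F$. This immediately gives $\omega(G) \le \MinHFunc(G)$, so if $\omega(G) > n-k$ we output NO. More importantly, if $\MinHFunc(G) \le n-k$, then a longest root-to-leaf path $P$ of a height-$(n-k)$ DFS tree $F$ has $n-k$ vertices; the remaining $k$ vertices form side subtrees attached to $P$, and $P$ induces a walk in $T$ that is heavily constrained by the geometry of the clique tree and the above comparability property.

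The main technical lemma I would aim to establish is that in a YES-instance the ``interesting'' portion of the clique tree has size $\Oh(k)$: only $\Oh(k)$ bags of $T$ are affected by the $k$ side vertices, while outside this flexible portion the shape of $F$ and its walk through $T$ are essentially forced by the clique tree. Granted this lemma, the algorithm enumerates candidate $\Oh(k)$-sized subtrees of $T$ together with all configurations of attachments of the $k$ side vertices to this flexible portion; encoding, for each of the $\Oh(k)$ side vertices, its comparability profile to the others and its attachment to the backbone naturally requires $\Oh(k^2)$ bits, yielding the $2^{\Oh(k^2)}$ factor. For each configuration we check DFS-admissibility in $\poly(k)$ time and, in the positive case, reconstruct a certifying DFS tree from the chosen configuration together with the forced backbone walk. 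The hardest part is proving the structural lemma---in particular, ruling out that the $k$ excess vertices can be scattered across many far-apart bags of $T$, and finding the right combinatorial encoding of the flexible portion to fit into the $2^{\Oh(k^2)}$ enumeration budget.
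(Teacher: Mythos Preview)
Your proposal has a genuine gap: you are missing the key reduction step that makes the $2^{\Oh(k^2)}$ enumeration possible at all.

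The paper's proof does \emph{not} enumerate subtrees of the clique tree. Instead, after computing a clique tree, it applies a balanced-separator lemma (Robertson--Seymour, Graph Minors~II) to obtain a clique $K$ whose removal leaves components $C_1,\ldots,C_c$ with $2\max_i |C_i| \le \sum_i |C_i| = n-|K|$. This gives an immediate trichotomy on $|K|$:
\begin{itemize}
    \item if $|K| \le n-2k$, starting the DFS along a Hamiltonian path of $K$ yields a DFS tree of height at most $|K| + \max_i |C_i| \le \tfrac{n+|K|}{2} \le n-k$, so the answer is YES;
    \item if $|K| > n-k$, then $\MinHFunc(G) \ge |K| > n-k$, so the answer is NO;
    \item only the ``moderate'' range $n-2k < |K| \le n-k$ requires real work.
\end{itemize}
In the moderate case one has $|V(G)\setminus K| < 2k$, and now the $2^{\Oh(k^2)}$ enumeration is over \emph{compressed candidate DFS trees}: vertices of $K$ are bucketed by their neighbourhood in $R\coloneqq V(G)\setminus K$ (at most $2^{2k}$ types), any DFS tree has all of $K$ on one vertical path, and contracting the long $K$-only stretches yields a tree on $\Oh(k)$ vertices labelled by types from $R\cup 2^R$. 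That labelling is where the $2^{\Oh(k^2)}$ comes from.

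Your proposed structural lemma---that only $\Oh(k)$ bags of the clique tree are ``affected'' in a YES-instance---is not a substitute for this reduction, and as stated it is hard to even make precise. The clique tree can have $\Theta(n)$ bags of small size, and a DFS tree of height $n-k$ need not single out any bounded-size subtree of $T$; the ``backbone walk through $T$'' you allude to has no a~priori reason to be forced outside an $\Oh(k)$-sized region. Without the balanced-separator step you have no guarantee that almost all of $V(G)$ sits inside one clique, and that single-large-clique structure is exactly what the paper's enumeration exploits.
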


In preparation for proving this theorem, we will first prove some auxiliary lemmas.

\begin{lemma}\label{lem:ham-sep}
Let $G$ be a graph, $S \subseteq V(G)$ be a subset of vertices such that $G[S]$ has a Hamiltonian path and $C_1, C_2, \ldots, C_c \subseteq V(G)-S$ be a partition of $V(G)-S$ into connected components. 
Then $\MinHFunc(G) \le |S| + \max(|C_1|, \ldots, |C_c|)$. Moreover, if we are given any Hamiltonian path of $G[S]$, then we can return a certifying DFS tree in $\Oh(|E(G)|)$ time. 
\end{lemma}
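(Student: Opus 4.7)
The plan is to construct the promised DFS tree by running DFS from an endpoint $s_1$ of the given Hamiltonian path $s_1 s_2 \ldots s_{|S|}$ of $G[S]$, with the following tie-breaking rule: whenever DFS is processing $s_j$ for some $j < |S|$, it chooses $s_{j+1}$ as the first unvisited neighbor to recurse on. Since $s_j s_{j+1} \in E(G[S]) \subseteq E(G)$ and $s_{j+1}$ is unvisited when $s_j$ is first processed, this forces $s_1, s_2, \ldots, s_{|S|}$ to form a root-to-leaf path (the \emph{spine}) of the constructed DFS tree $F$, so each $s_j$ sits at depth $j \le |S|$.

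After this spine is built, I would argue that every component $C_i$ is attached to the spine as a single subtree of depth at most $|C_i|$. Let $j_i$ be the largest index such that $s_{j_i}$ has a neighbor in $C_i$ (such an index exists, since $G$ must be connected for it to admit a DFS tree at all). By the time the recursion backtracks to $s_{j_i}$, the whole spine $S$ is already visited, and every component $C_{i'}$ adjacent to some $s_k$ with $k > j_i$ has been fully explored from that deeper spine vertex, so $C_i$ is still untouched at this point. When DFS then enters $C_i$ via an edge $s_{j_i} v$, from $v$ every unvisited neighbor lies in $C_i$ (because all of $v$'s neighbors outside $C_i$ lie in $S$, which is visited), and by connectedness of $C_i$ the recursion visits exactly $V(C_i)$ before returning to $s_{j_i}$. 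Hence the subtree of $F$ containing $C_i$ has exactly $|C_i|$ vertices, its depth is at most $|C_i|$, and its root is a child of $s_{j_i}$ at depth $j_i + 1$.

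Combining the two observations, every vertex of $F$ has depth at most $|S| + \max_i |C_i|$, yielding the claimed bound on $\MinHFunc(G)$. The construction is a single DFS whose tie-breaking can be implemented by placing the path edges first in the adjacency lists of $s_1, \ldots, s_{|S|-1}$, so it runs in $\Oh(|V(G)| + |E(G)|) = \Oh(|E(G)|)$ time. The main obstacle is really just formalizing that each $C_i$ ends up as a single contiguous subtree hanging off the spine, rather than being split across several spine vertices or entangled with another component; but this follows cleanly from $S$ being a separator of the $C_i$'s together with the fact that $S$ is fully visited before any off-spine exploration begins, which is exactly the content of the argument sketched above.
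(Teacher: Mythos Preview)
Your proposal is correct and follows essentially the same approach as the paper: run DFS starting along the given Hamiltonian path of $G[S]$, then bound the height. The only difference is in the analysis. You pin down the exact spine vertex $s_{j_i}$ at which each $C_i$ is attached and argue that the entire component becomes a single subtree there, while the paper argues more abstractly that any two vertices in distinct components $C_i$, $C_j$ are incomparable in the resulting tree (since the tree path between them would lie entirely outside $S$), and hence every vertical path meets at most one $C_i$. Both analyses yield the same bound; the paper's is a little slicker, but yours is a correct and slightly more explicit version of the same idea.
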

\begin{proof}
Let $T$ be any DFS tree resulting from starting a DFS traversal of our graph by traversing the provided Hamiltonian path $P$ within $G[S]$. It can be easily seen that if $v \in C_i$ and $u \in C_j$ for $i \neq j$ then $u$ and $v$ will not be comparable in $T$. 
Hence, any vertical path can contain vertices from at most one of the sets $C_1, \ldots, C_c$. Therefore, it cannot contain more than $\max(|C_1|, \ldots, |C_c|)$ vertices from $C_1 \cup \ldots \cup C_c$.
As a consequence, every vertical path in $T$ is of height at most $|S| + \max(|C_1|, \ldots, |C_c|)$, hence $\MinHFunc(G) \le |S| + \max(|C_1|, \ldots, |C_c|)$. Also, if we are given $P$, then we can compute $T$ in $\Oh(|E(G)|)$ time.
\end{proof}

We need the following lemma which has been proved as Lemma 2.5 in \cite{GraphMinorsII}.
We would like to point out that the statement of \cite[Lemma 2.5]{GraphMinorsII} is weaker than the following in two ways.
First, its statement is purely existential, while we need to be able to compute the desired set in polynomial time.
Second, it only speaks about a set of vertices, 
while we require the set to be a subset of a bag of a tree decomposition.
However, a close inspection of the proof of \cite[Lemma 2.5]{GraphMinorsII}
reveals that it proves the following, stronger, statement.
\begin{lemma}[Cf.~Lemma 2.5 in \cite{GraphMinorsII}]\label{lem:balanced-sep-tw}
    Let $G$ be an undirected $n$-vertex graph and $(T, \beta)$ be its tree decomposition of width $w$. 
    There exists $K \subseteq V(G)$, which is a subset of some bag $\beta(v)$, such that if $C_1, \ldots, C_c$ are vertex sets of connected components of $G-K$, then we have $2 \cdot \max(|C_1|, \ldots, |C_c|) \le |C_1| + |C_2| + \ldots + |C_c|$. Moreover, $K$ can be found in $\Oh(n\cdot w)$ time.
\end{lemma}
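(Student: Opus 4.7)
The plan is to construct the desired set $K$ by a walk on $T$ followed by a bag-local trimming step, adapting the centroid approach underlying Lemma~2.5 of \cite{GraphMinorsII}.

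First, I would preprocess $(T, \bags)$ so that for every tree edge $uv$ neither $\bags(u) \subseteq \bags(v)$ nor $\bags(v) \subseteq \bags(u)$ holds; any containment is resolved by contracting the edge into the larger bag, which preserves the width and the statement to be proved. Then, for every directed tree edge $(u,v)$, I would compute $a_{u,v} := |A_{u,v}|$, where $A_{u,v}$ denotes the set of vertices appearing in a bag of the subtree $T_{u,v}$ (the component of $T - v$ containing $u$) but not in $\bags(v)$. A bottom-up followed by a top-down dynamic programming pass on $T$ computes all values $a_{u,v}$ in $\Oh(n \cdot w)$ time. Two identities will be used throughout: $V(G) = A_{u,v} \sqcup A_{v,u} \sqcup (\bags(u) \cap \bags(v))$ for every edge $uv$, giving $a_{u,v} + a_{v,u} = n - |\bags(u) \cap \bags(v)|$, and $\sum_{u \sim v} a_{u,v} = n - |\bags(v)|$ for every node $v$.

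By connectedness of bag-subtrees, vertices lying in different sets $A_{u_1, v}$, $A_{u_2, v}$ with $u_1 \neq u_2$ cannot share a bag, so every connected component of $G - \bags(v)$ is contained entirely in some $A_{u, v}$; in particular $\max_i |C_i| \le \max_{u \sim v} a_{u, v}$ whenever $K = \bags(v)$. To locate the right node I would walk on $T$: starting at any $v$, if $\phi(v) := \max_{u \sim v} a_{u, v} \le (n - |\bags(v)|)/2$ output $K := \bags(v)$; otherwise let $u^\star$ be the unique neighbour with $a_{u^\star, v} > (n - |\bags(v)|)/2$ (unique by the sum identity), and if $a_{v, u^\star} \ge a_{u^\star, v}$ stop, else move to $u^\star$ and repeat. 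When moving from $v$ to $u^\star$ the value $\phi$ strictly decreases: we have $a_{v, u^\star} < a_{u^\star, v} = \phi(v)$ by the moving condition, and for any other neighbour $w \ne v$ of $u^\star$ the subtree partition $A_{u^\star, v} = (\bags(u^\star) \setminus \bags(v)) \sqcup \bigsqcup_{w \sim u^\star,\ w \ne v} A_{w, u^\star}$ gives $\sum_{w \ne v} a_{w, u^\star} = a_{u^\star, v} - |\bags(u^\star) \setminus \bags(v)| < a_{u^\star, v}$, where the strict inequality uses the incomparability preprocessing. Hence the walk terminates in $\Oh(n)$ steps.

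If the walk stops at $v$ with heavy neighbour $u^\star$, let $\alpha := a_{u^\star, v}$, $\alpha' := a_{v, u^\star} \ge \alpha$, and $\gamma := |\bags(u^\star) \cap \bags(v)|$, so $\alpha + \alpha' = n - \gamma$. Pick any $S \subseteq \bags(v) \setminus \bags(u^\star)$ of size $\alpha' - \alpha$; this is feasible because $\alpha > (n - |\bags(v)|)/2$ implies $|\bags(v)| - \gamma \ge \alpha' - \alpha$. Set $K := (\bags(u^\star) \cap \bags(v)) \cup S \subseteq \bags(v)$. Since $\bags(u^\star) \cap \bags(v)$ separates $A_{u^\star, v}$ from $A_{v, u^\star}$ in $G$, every component of $G - K$ lies either in $A_{u^\star, v}$ or in $A_{v, u^\star} \setminus S$, giving $\max_i |C_i| \le \max(\alpha, \alpha' - |S|) = \alpha$. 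Directly, $(n - |K|)/2 = (\alpha + \alpha' - |S|)/2 = \alpha$, so the balance inequality $2 \max_i |C_i| \le \sum_i |C_i|$ holds. The main subtlety is establishing the strict decrease of $\phi$ along the walk, which rests on the incomparability preprocessing ensuring $|\bags(u^\star) \setminus \bags(v)| \ge 1$. The total running time is $\Oh(n \cdot w)$, dominated by the initial dynamic programming that computes all $a_{u, v}$.
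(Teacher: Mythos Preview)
Your argument is correct and is precisely the centroid-walk argument of Robertson and Seymour that the paper defers to; the paper does not supply its own proof of this lemma but simply remarks that a close reading of \cite[Lemma~2.5]{GraphMinorsII} already yields both the algorithmic and the ``$K$ is contained in a bag'' strengthenings. Your write-up makes these two points explicit (the $\Oh(n\cdot w)$ dynamic programming for the quantities $a_{u,v}$, and the bag-local trimming with $S\subseteq\bags(v)\setminus\bags(u^\star)$), so it is exactly what the paper is pointing at.
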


We proceed to the most challenging part of proving \Cref{thm:fpt-chordal}, which is the following lemma:

\begin{lemma}\label{lem:moderate-case}
Let $G$ be a chordal graph, $k$ be a positive integer and $K$ be a clique in $G$ such that $n-2k<|K| \le n-k$ given to us. Then, we can check if $\MinHFunc(G) \le n-k$ in $\Oh(n+m) + 2^{\Oh(k^2)}$ time and $\Oh(n + m)$ space.  Moreover, if this inequality holds, then we can return a certifying DFS tree in the same time and space complexities.
\end{lemma}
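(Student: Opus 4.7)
Let $R = V(G) \setminus K$, so $|R| = r$ satisfies $k \le r \le 2k-1$. The key structural observation is that, since $K$ is a clique, in any DFS tree $T$ of $G$ the $K$-vertices form a single vertical chain $u_1, u_2, \ldots, u_{|K|}$ in $T$ (ordered by first DFS visit): $u_{i+1}$ is a $G$-neighbor of $u_i$ and is visited within the subtree rooted at $u_i$, hence a descendant of $u_i$. Thus $T$ decomposes into a \emph{trunk} $P$ from the root of $T$ down to $u_{|K|}$ that contains all of $K$ in this order (possibly with $R$-vertices interspersed), together with \emph{off-trunk subtrees} attached to vertices of $P$ whose vertex sets lie entirely in $R$. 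The condition $\MinHFunc(G) \le n-k$ is equivalent to the existence of such a configuration with $|P| \le n-k$ --- equivalently, at most $r-k$ vertices of $R$ lie on $P$ --- and, for every off-trunk subtree of height $h$ attached at trunk-depth $d$, $d+h \le n-k$. Moreover, when the DFS is run with the ``trunk-child first at every trunk vertex'' order, each off-trunk subtree is exactly the DFS tree of one connected component of $G[R \setminus R_P]$, where $R_P$ is the set of $R$-vertices on $P$; so the partition of $R \setminus R_P$ is determined once $R_P$ is fixed.

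The algorithm enumerates, in $2^{\Oh(k \log k)}$ total, all configurations of the $R$-vertices and for each performs a $\mathrm{poly}(k)$-time feasibility check. Specifically, it iterates over: (i) a subset $R_P \subseteq R$ with $|R_P| \le r-k$ together with an ordering of $R_P$ along $P$, whose ``break pattern'' (where a $K$-run separates two consecutive $R_P$-vertices) is forced by non-adjacencies in $G[R]$; (ii) for each connected component $C$ of $G[R \setminus R_P]$, a root $v_C \in C$, which determines the off-trunk subtree as the DFS tree of $G[C]$ rooted at $v_C$; and (iii) an assignment of each component $C$ to a trunk slot --- an $R_P$-vertex adjacent to $v_C$ in $G$, or one of the $\Oh(k)$ $K$-runs on $P$ provided $v_C$ has a $K$-neighbor. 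The minimum DFS tree height of $G[V]$ rooted at each $v \in V$, for every $V \subseteq R$, is precomputed by a standard subset DP in $2^{\Oh(k \log k)}$ time. The feasibility check verifies: the required $G$-adjacencies along $P$; that the $\Oh(k)$ constrained $K$-positions (boundaries with $R_P$-vertices on $P$ and attachment points for off-trunk subtrees in $K$-runs) admit an assignment of distinct $K$-vertices respecting the required $K$-neighborhoods via a Hall-type matching of size $\Oh(k)$; and that the $K$-run sizes $(g_0, g_1, \ldots)$ summing to $|K|$ can be chosen so that for every off-trunk subtree $T$ with precomputed height $h_T$ attached at trunk-depth $p_T$, $p_T + h_T \le n-k$. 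An $\Oh(n+m)$-time preprocessing extracts $E(G[R])$ and $N_K(v)$ for each $v \in R$ in $\Oh(n+m)$ space, and reconstructing a certifying DFS tree from a feasible configuration costs $\Oh(n)$ additional time.

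The main obstacle is the interplay between $K$-run sizes and off-trunk depth constraints: shifting one $K$-vertex between runs changes the trunk-depth of every vertex below, and multiple subtrees attached to the same $K$-run may require distinct $K$-vertices at distinct intra-run positions (determined by their roots' $K$-neighborhoods). My plan is to exploit that only $\Oh(k)$ $K$-positions are constrained while the remaining $K$-vertices are interchangeable ``filler'': after fixing the constrained $K$-vertices by the Hall-type matching, distributing the filler $K$-vertices among the $\Oh(k)$ runs subject to the $\Oh(k)$ depth inequalities $p_T + h_T \le n-k$ reduces to a small linear feasibility problem over $\Oh(k)$ variables, solvable in $\mathrm{poly}(k)$ time --- for instance by sorting off-trunk subtrees by $h_T$ descending and greedily placing each at the shallowest available position of a valid $K$-run or $R_P$-vertex, then checking the total $K$-budget. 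Combining the $2^{\Oh(k \log k)}$ enumeration with $\mathrm{poly}(k)$ feasibility work gives the claimed running time $\Oh(n+m) + 2^{\Oh(k^2)}$ and space $\Oh(n+m)$.
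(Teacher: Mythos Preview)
Your structural setup is sound: the clique $K$ forms a vertical chain, the DFS tree decomposes into a trunk plus off-trunk subtrees that are precisely the connected components of $G[R\setminus R_P]$, and the height condition reduces to $|P|\le n-k$ together with $d_C+h_C\le n-k$ for each component~$C$. The subset DP for minimum-height DFS trees on subsets of $R$ is also fine.

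The gap is the claim that ``the remaining $K$-vertices are interchangeable filler''. They are not. A $K$-vertex $u$ with $N(u)\cap R=S$ is adjacent to every off-trunk vertex in $S$; hence, for \emph{every} component $C$ with $C\cap S\neq\emptyset$, the vertex $u$ must lie at or above the attachment point $x_C$ on the trunk --- otherwise $u$ and some $v\in C\cap S$ are incomparable, a cross edge. This constraint applies to \emph{all} $|K|$ clique vertices, not just the $\Oh(k)$ boundary/attachment positions, and it depends on the type $S\in 2^R$ of each clique vertex. Your Hall-type matching only enforces adjacency at the $\Oh(k)$ distinguished positions and your subsequent ``linear feasibility'' step treats the remaining clique vertices as identical; neither step sees that a filler vertex of type $S$ is forbidden from any $K$-run lying below some $x_C$ with $C\cap S\neq\emptyset$. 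So your feasibility check will accept configurations that correspond to no DFS tree of $G$.

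This missing constraint is exactly what drives the $2^{\Oh(k^2)}$ in the paper. The paper defines $g_S=|\{v\in K:N(v)\cap R=S\}|$ and, rather than enumerating only the skeleton, enumerates a \emph{compressed tree} together with a map $\tau\colon V(C)\to R\cup 2^R$ assigning a \emph{type} to each of the $\Oh(k)$ important clique positions; this is the $(r+2^r)^{\Oh(k)}=2^{\Oh(k^2)}$ factor. Having fixed the types at important positions, each ``lost'' clique vertex of type $S$ has a well-defined set of amendable edges (those above every off-trunk $v\in S$) into which it may be inserted; Lemma~4.4 is the sanity check that this set is nonempty, and the optimal distribution is then a one-line greedy (push each type into its deepest allowed amendable edge). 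Without enumerating types, you are left with a placement problem over up to $2^{2k}$ distinct constraint classes of clique vertices interacting with variable run sizes and intra-run attachment depths, and it is not at all clear this admits a $\mathrm{poly}(k)$ --- or even $2^{\Oh(k)}$ --- feasibility test.
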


This lemma is a restriction of \Cref{thm:fpt-chordal} to the hardest case of a ``moderate clique''. 
We will firstly see how to prove \Cref{thm:fpt-chordal} assuming correctness of \Cref{lem:moderate-case}, and then we will focus on proving \Cref{lem:moderate-case}.

\begin{proof}[Proof of \Cref{thm:fpt-chordal}]
We first compute a clique tree $\Tt$ of $G$, that is, a tree decomposition where each bag is a maximal clique, in linear time~\cite{LexBFS}.
If $G$ is a clique, then we obviously have $\MinHFunc(G)=n$. If $G$ is not a clique, then we can apply \Cref{lem:balanced-sep-tw} and get a set $K$ such that we have that $2 \cdot \max(|C_1|, \ldots, |C_c|) \le |C_1| + |C_2| + \ldots + |C_c|$, where $C_1, \ldots, C_c$ are vertex sets of connected components of $G-K$. 
Since $K$ is the subset of some bag of $\Tt$, it is a clique.
Let us consider a few cases based on the size of $K$:
\begin{description}
    \item[The ``small clique case'':] $|K| \le n-2k$
    \item[The ``moderate clique case'':] $n - 2k < |K| \le n - k$
    \item[The ``large clique case'':] $n-k<|K|$
\end{description}

In the large clique case we clearly have $\MinHFunc(G) \ge |K| > n-k$, so we return $NO$.
In the small clique case we apply \Cref{lem:ham-sep} (as $K$ is clearly Hamiltonian and we can easily get its Hamiltonian path as well) and get that $\MinHFunc(G) \le |K| + \max(|C_1|, \ldots, |C_c|)$ and get a certifying DFS tree as well. Note that $|K| + \max(|C_1|, \ldots, |C_c|) \le \frac{2|K| + |C_1| + \ldots + |C_c|}{2} = \frac{n + |K|}{2} \le n-k$, hence we can return $YES$ and a certifying DFS tree as well.
What remains is the most challenging moderate clique case, which is handled by \Cref{lem:moderate-case}.
\end{proof}

We will now explain the proof of \Cref{lem:moderate-case}.

\begin{proof}[Proof of \Cref{lem:moderate-case}]
Let us denote $R \coloneqq V(G) - K$, that is, the set of vertices outside of $K$, and denote their number by $r$. Let us recall that in this case we have $k \le r < 2k$. 
Now, for $S \subseteq R$, 
let us define $g_S = |\{v | v \in K, N(v) \cap R = S\}|$, that is, the number of vertices of $K$ whose neighborhood in $R$ is $S$.

The main idea behind this algorithm is the fact that since $r$ is bounded from above by $2k$, the number of different neighborhoods in $R$ of vertices of $K$ is bounded by $2^{r} \le 2^{2k}$. As all vertices from $K$ are adjacent to each other, our instance can be in a sense compressed to an instance on $r+2^r$ vertices, where vertices have varying values of congestion (which means, that they should be used multiple times), where the $2^r$ part comes from compressing $g_S$ corresponding vertices of $K$ to one vertex with congestion $g_S$. The value $r+2^r$ is bounded from above as a function of $k$, but the quantities $g_S$ are unbounded in terms of $k$, so dealing with that issue is the main obstacle in showing that this problem is fixed parameter tractable when parameterized by $k$.

We will also define a \emph{type of a vertex $v$} as a function $\mathrm{type}(v) : V(G) \to (R \cup 2^R)$ as

\begin{equation*}
    \mathrm{type}(v) = 
        \begin{cases}
        v, & \text{for } v \in R \\
        N(v) \cap R & \text{for } v \in K
        \end{cases},
\end{equation*}
following the idea that vertices within one type are indistinguishable. In particular, knowing the types of two vertices is sufficient information to determine if they are adjacent. By a slight abuse of notation, we will say that two types are adjacent if and only if any two vertices belonging to said types are adjacent (which does not depend on the specific choice of vertices).

Let $T$ be any DFS tree of $G$. We will define its compressed version $\Compr(T)$ in the following way. Let us define the set $I \subseteq V(G)$ of \emph{important vertices} as the set consisting of $N_T[R]$, the root of $T$ and the deepest vertex of $K$ in $T$ (note that as all vertices of $K$ lie on a single vertical path in $T$ --- such vertex is well defined). Note that $N_T[V(G) \setminus I] \subseteq K$ and all vertices of $V(G) \setminus I$ have a parent (because the root is in $I$) and exactly one child (they cannot have zero children as the deepest vertex of $K$ belong to $I$ and they cannot have more than one, because all vertices of $K$ lie on a single vertical path). Hence, all vertices of $V(G) \setminus I$ have degree two in $T$, and the subgraph induced by all edges adjacent to at least one vertex of $V(G) \setminus I$ is a set of paths with ends in $K$. We get the compressed version $\Compr(T)$ of $T$ by bypassing all vertices from $V(G) \setminus I$, that is, replacing each such path with a single edge between its ends. 
Note that as all vertices of $K$ lie on a single vertical path of $T$, so every vertex of $R$ has at most two neighbors from $K$ in $T$, hence $|V(\Compr(T))| \le 3|R|+2 = \Oh(k)$. 
For any edge in $\Compr(T)$ both of whose ends belong to $K$, we will call it $\emph{amendable}$, denoting the fact that during the process of ,,decompressing'' $\Compr(T)$ to $T$, we may add some vertices of $K$ to this edge, turning it into a nontrivial path of vertices of $K$.

The plan is to iterate over all possible candidates $C$ for a compressed version of $T$, for each of them determine whether there exists any $T$ such that $\Compr(T) = C$, and if yes, then greedily determine the one minimizing its height. A candidate $C$ for $\Compr(T)$ comprises of a rooted tree and a mapping from vertices of $C$ to types of vertices from $V(G)$ (as vertices within one type are indistinguishable, we do not have to guess the full mapping from $V(C)$ to $V(G)$). All candidates $C$ have $\Oh(k)$ vertices and there are at most $2^{\Oh(k)}$ rooted trees on that many vertices and at most $(r+2^r)^{\Oh(k)} = 2^{\Oh(k^2)}$ mappings $V(C) \to (R \cup 2^R)$, hence there are at most $2^{\Oh(k^2)}$ candidates for $C$. 

Now, for a candidate $C$ with a mapping $\tau : V(C) \to (R \cup 2^R)$ and root $t$ we should run some sanity checks to see if it could be a compressed version of some DFS tree. First, for each $v \in R$ we should have $|\tau^{-1}(v)| = 1$ and for each $S \in 2^R$ we should have $|\tau^{-1}(S)| \le g_S$. For each edge $ab$ of $C$, the types $\tau(a)$ and $\tau(b)$ should be adjacent. Also, for each pair of incomparable vertices $a, b$ of $C$, the types $\tau(a)$ and $\tau(b)$ cannot be adjacent. 
All vertices mapped to types from $2^R$ have to lie on a single vertical path and let the deepest one be called $X$. 
Also, for $S \subseteq 2^R$, by $H(S)$, denote the set of vertices $v \in R$ such that $v$ and $S$ are adjacent and $\tau^{-1}(v)$ is not an ancestor of $X$ (at this point we have already performed the check that $|\tau^{-1}(v)|=1$, so by a slight abuse of notation we may treat $\tau^{-1}(v)$ as its only element rather than a set). The only nontrivial condition is the following:

\begin{lemma} \label{lem:amendable}
Let $S \in 2^R$ be such that $|\tau^{-1}(S)| < g_S$. Then, for any $v \in H(S)$ at least one of the edges on the path from $t$ to $\tau^{-1}(v)$
has to be amendable for $C$ to be a viable candidate (we recall that $t$ is the root of $C$).
\end{lemma}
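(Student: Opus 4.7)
The plan is to fix any DFS tree $T$ of $G$ with $\Compr(T) = C$, and show that for every $v \in H(S)$ the path from $t$ to $\tau^{-1}(v)$ in $C$ must contain an amendable edge; the lemma then follows by contrapositive.

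First, I will use the hypothesis $|\tau^{-1}(S)| < g_S$ to extract a witness $u \in K$ of type $S$ that is not an important vertex of $T$. Since $K$ is a clique, all of its vertices form a single vertical path in $T$ (the ``$K$-spine''), whose bottom vertex corresponds to $X$ in $C$; in particular $u$ is a (possibly strict) ancestor of the vertex represented by $X$. Because $u \notin I$, the vertex $u$ is neither the root, nor the deepest $K$-vertex, nor in $N_T[R]$, so it lies strictly between two important $K$-vertices on the spine and has no $R$-neighbor in $T$. Consequently both $T$-neighbors of $u$ are in $K$, and during compression $u$ is bypassed inside some edge $e = (a,b)$ of $C$ whose endpoints are both $K$-typed. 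By definition $e$ is amendable, and $e$ lies on the root-to-$X$ path in $C$.

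The heart of the argument is a comparability case analysis between $u$ and $v$. Fix any $v \in H(S)$; since the types of $v$ and $u$ are adjacent, $v$ and $u$ are adjacent in $G$, so they are comparable in $T$. The easier case to dispose of is when $v$ is an ancestor of $u$: then $v$ lies on the root-to-$u$ path, and hence on the root-to-$X$ path, which forces $\tau^{-1}(v)$ to be an ancestor of $X$ in $C$, contradicting $v \in H(S)$. So $u$ is a strict ancestor of $v$ in $T$. Since $u$'s only $T$-child is its $K$-successor on the spine and $v \in R$, $v$ must sit in a subtree hanging off the spine at or below $b$; in fact $v$ is a descendant of $b$, because the non-important $K$-vertices strictly between $u$ and $b$ have no $R$-neighbors in $T$ and hence no $R$-subtrees hanging off them. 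After compression this means $\tau^{-1}(v)$ is a descendant of $b$ in $C$, and therefore the edge $e = (a,b)$ appears on the path from $t$ to $\tau^{-1}(v)$, as desired.

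The step I expect to be most subtle is the structural bookkeeping: translating ``$u \notin I$'' into ``both $T$-neighbors of $u$ are in $K$'', and then into ``$u$ is bypassed inside an amendable edge'', while simultaneously verifying that the compression preserves the ancestor relation between $b$ and $\tau^{-1}(v)$. Once this is set up, the comparability case analysis between $u$ and $v$ is essentially a one-liner using that $T$ is a DFS tree and that $K$ is a clique; chordality of $G$ enters only indirectly, via the fact that the clique $K$ forces all of its vertices onto a common vertical path in $T$.
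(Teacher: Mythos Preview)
Your proof is correct and follows essentially the same approach as the paper: both arguments exploit that a lost vertex of type $S$ must be adjacent to $v$ in $G$ and hence comparable to $v$ in $T$, and then use the definition of $H(S)$ to force this lost vertex to be an ancestor of $v$, placing its amendable edge on the root-to-$\tau^{-1}(v)$ path. The only cosmetic difference is that you fix a concrete witness $u$ and track the specific amendable edge containing it, whereas the paper phrases the same reasoning as a contrapositive over all amendable edges simultaneously.
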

\begin{proof}
This condition must hold because if $|\tau^{-1}(S)| < g_S$, then at least one vertex of type $S$ was lost during the compression process, so it must have been a part of a path that became an amendable edge. If $\tau^{-1}(v)$ is not an ancestor of $X$, then it is not an ancestor of any vertex from $\tau^{-1}(2^R)$, so all vertices from $\tau^{-1}(2^R)$ that are comparable to $\tau^{-1}(v)$ are ancestors of $\tau^{-1}(v)$ . Hence, if none of the edges on the path from $t$ to $\tau^{-1}(v)$ is amendable, then for any amendable edge, its deeper end is not comparable with $\tau^{-1}(v)$ and all vertices that it will give rise to when being decompressed will be incomparable to $v$ in $T$ (hence, non-adjacent to $v$ in $G$). Hence, no vertex that $v$ is adjacent to in $G$ might have been lost in the compression process, which contradicts the fact that $|\tau^{-1}(S)| < g_S$.
\end{proof}

Assume that $C$ fulfilled all of these conditions. We claim that there is at least one $T$ such that $\Compr(T) = C$ and we will determine one such tree minimizing the height. Note that the choice that we have while decompressing $C$ is to assign a mapping $L(e) : 2^R \to \mathbb{Z}_{\ge 0}$ for each amendable edge $e$, denoting the number of vertices of each type that were lost while compressing the path that became $e$ in $C$. Notice that as $K$ forms a clique, we automatically have the property that whatever order we put these new vertices in for every amendable edge, edges of $T$ will still be edges of $G$, so their order within the amendable edges paths does not matter. As all amendable edges lie on a single vertical path, if for a vertex type $S \in 2^R$, we consider all vertices $v \in H(S)$, then the respective sets of amendable edges on $vt$ paths are such that the intersection of all of them is equal to the smallest of them. Hence, we can distribute $g_S - |\tau^{-1}(S)|$ vertices of type $S$ over amendable edges that are within this smallest set in any way, and by doing so, we will ensure that no vertices of type $S$ are in a pair $vS$ violating the condition that adjacent vertices of $G$ have to be comparable in $T$. That concludes why these conditions are sufficient for the existence of any $T$ such that $T$ is a DFS tree of $G$ and $\Compr(T) = C$.

What is left is to determine the choice of $T$ minimizing its height. Let us observe that for each $S$ any distribution of $g_S - |\tau^{-1}(S)|$ vertices of type $S$ over the described set of amendable edges will lead to a viable $T$. As all amendable edges lie on a single vertical path, it is clear that the best choice is to put all $g_S - |\tau^{-1}(S)|$ of them into the deepest of these allowed amendable edges.

We keep the representation of $T$ concise using mappings $L(e)$ (as opposed to creating copies of all $n$ vertices per each viable $C$). The height of the DFS tree that would be an effect of decompression can easily be calculated. 
All computations regarding considering one $C$ can be performed in time $2^{\Oh(k)}$. If we find any feasible solution, we can then easily create a witnessing $T$ in an uncompressed form, that is, simply as a subgraph of $G$. Hence there are $2^{\Oh(k^2)}$ candidates $C$ and considering each of them takes $2^{\Oh(k)}$ time, so the whole algorithm works in $2^{\Oh(k^2)} + \Oh(n^2)$ time and $2^{\Oh(k)} + \Oh(n^2)$ space.
``Empty types'', i.e., types that do not have any vertex that match them, can be omitted in the space complexity analysis, which brings the space complexity down to $\Oh(n^2)$.
\end{proof}

\section{An explicit FPT-algorithm for \MinHProb} \label{sectn:faster-MinLT-fpt-algo}

In~\cite{SamFRG23},
Sam, Fellows, Rosamond, and Golovach showed that \MinHProb is FPT by solution size,
using Courcelle's Theorem.
In this section, we give an explicit algorithm for this problem 
which was mentioned as an open problem by Golovach in~\cite{DagstuhlDrawing}.

\thmMinHProb*

To prove \Cref{thm:fpt-td2}, 
we follow a similar strategy as the FPT-algorithm to compute treedepth due to 
Reidl, Rossmanith, Sanchez Villaamil, and Sikdar~\cite{Reidl14}. Perhaps the most interesting part about our algorithm is not the fact that it can be obtained as an adaptation of the algorithm from \cite{Reidl14}, but the fact that thanks to specific properties of DFS trees, the \MinHProb problem turns out to be solvable in $k^{\Oh(k)} \cdot n$ time, rather than $2^{\Oh(k^2)}$ time complexity which is required for the fastest known algorithms for computing treedepth exactly~\cite{Reidl14,Nadara22}, which would also be the time complexity of the natural adaptation of \cite{Reidl14} to \MinHProb.

For this section, instead of thinking that a DFS tree $F$ of a graph $G$ has the same set of vertices, we prefer thinking in a way that they have different vertex sets, but we are additionally provided with a bijective mapping $\rho$ from $V(G)$ to $V(F)$.

\begin{definition} \label{def:full-solution}
We say that $(F, \rho)$ is a \emph{full solution} for $G$ if $F$ is a tree and $\rho$ is a bijective mapping from $V(G)$ to $V(F)$ such that $uv \in E(G)$ implies that $\rho(u)$ and $\rho(v)$ are in an ancestral relation in $F$ (that is, one of them is the ancestor of the other in $F$) and $uv \in E(F)$ implies $\rho^{-1}(u) \rho^{-1}(v) \in E(G)$.
\end{definition}

The starting point is this:
If a graph has a DFS tree of height at most $k$,
then this tree witnesses that 
$G$ has treedepth at most $k$.
Since for each graph $G$, we have that
$\tw(G) \le \td(G)$,
and since 
we can compute a $2$-approximation of the treewidth of a graph $G$ in $2^{\Oh(\tw(G))}\cdot |V(G)|$ time~\cite{Korhonen21},
we may assume that we have a tree decomposition of width at most $2k$ of the input graph at hand.
Below, we describe a dynamic programming algorithm that uses this tree decomposition to decide if $G$ has a DFS tree of height at most $k$.
For convenience, we further assume that this tree decomposition is a \emph{nice tree decomposition}, see \Cref{def:nice:td}.

The following definition will become handy:
\begin{definition} \label{def:closure}
Let $T$ be a rooted tree and $X$ be a subset of its vertices. The ancestral closure of~$X$, denoted $\mathsf{Clos}_T(X)$, is the subgraph of $T$ induced on vertices of $X$ and all of their ancestors.
\end{definition}

Let us now define a \emph{partial solution for a graph $H$}.
\begin{definition} \label{def:partial-graph}
    A partial solution for a graph $H$ is a pair $(D, \tau)$, where $D$ is a rooted tree of height at most $k$ and $\tau : V(H) \to V(D)$ is an injective mapping such that for any $uv \in E(H)$ we have that $\tau(u)$ and $\tau(v)$ are in an ancestral relation in $D$, for any $uv \in E(D)$ such that $u, v \in \tau(V(H))$, we have that $\tau^{-1}(u)\tau^{-1}(v) \in E(H)$ and any leaf of $D$ belongs to $\tau(V(H))$.
\end{definition}

Intuitively, such partial solution represents a candidate leading to a full solution for a~supergraph $G$ of $H$, where the vertices of $D$ that are not in the $\tau(V(H))$ will correspond to some vertices of $V(G) \setminus V(H)$. Let us make this intuition formal through the following definition:

\begin{definition} \label{def:partial-full-extension}
Let $H$ be a subgraph of $G$, $(D, \tau)$ be a partial solution for $H$ and $(F, \rho)$ be a~full solution for $G$. We say that $(D, \tau)$ \emph{extends to} $(F, \rho)$ if there exists a mapping $\pi : V(D) \to V(F)$ that maps the root of $D$ to the root of $F$, respects edges and $\tau \circ \pi$ and $\rho$ agree on $V(H)$.
\end{definition}

Note that if $(D, \tau)$ extends to $(F, \rho)$, then $D = \mathsf{Clos}_F(\tau(V(H)))$. That holds because $D$ is a connected subgraph of $F$ containing the root of $F$ and the whole $\tau(V(H))$, hence $\mathsf{Clos}_F(\tau(V(H))) \subseteq D$, and if $V(D) \setminus V(\mathsf{Clos}_F(\tau(V(H)))$ was nonempty, then $D$ would contain a leaf that does not belong to $\tau(V(H))$, which contradicts the definition of a partial solution.

Let $\Tt = (T, \beta)$ be a tree decomposition of $G$ of width at most $2k$ that we have already computed. We will define \emph{partial solutions for nodes of $T$} as well.

\begin{definition} \label{def:partial-node}
A partial solution for a node $t \in V(T)$ is a triple $(D, \tau, S)$, where $D$ is a rooted tree, $\tau$ is an injective mapping from $\beta(t)$ to $V(D)$, and $S \subseteq V(D)$,
such that there exists a partial solution $(D', \tau')$ for $G_t$ such that 
\begin{itemize}
    \item $D = \mathsf{Clos}_{D'}(\tau(\beta(t)))$,
    \item $\tau$ and $\tau'$ agree on $\beta(t)$,
    \item $\tau'(V(G_t) \setminus \beta(t)) \cap V(D) = S$ and $\tau'(V(G_t) \setminus \beta(t)) \setminus V(D) = V(D') \setminus V(D)$.
\end{itemize}   
We additionally say that such $(D, \tau, S)$ \emph{extends to} $(D', \tau')$.
\end{definition}

Intuitively, a partial solution for a node $t$ is a partial solution for $G_t$ stripped off any redundant information relating to already forgotten vertices (where forgotten vertices means $V(G_t) \setminus \beta(t)$) that will never influence our future possibilities of extending such partial solution to a full solution for the whole graph $G$.

In this setting, the vertices $V(D')$ can be partitioned into the following parts:
\begin{enumerate}
    \item $\tau(\beta(t))$ --- images of the vertices from the node $t$
    \item $S$ --- vertices on paths from some $v \in \tau(\beta(t))$ to the root that have a vertex that is already forgotten as their preimage in $\tau'$
    \item $V(D) \setminus (\tau(\beta(t)) \cup S)$ --- vertices on paths from some $v \in \tau(\beta(t))$ to the root that might have a vertex that will be introduced later as their preimage in $\tau'$
    \item $V(D') \setminus V(D)$ --- vertices that are images in $\tau'$ of forgotten vertices and not on a path between $r$ and any $v \in \tau(\beta(t))$
\end{enumerate}
As vertices forgotten below $t$ are not adjacent to anybody from $V(G) \setminus V(G_t)$, we do not need to remember anything about how the part outside of $\mathsf{Clos}_{D'}(\tau(\beta(t)))$ looks like, and for any vertex $u \in V(D) \cap \tau'(V(G_t) \setminus \beta(t))$, we also do not need to remember who specifically was mapped to $u$ beside the fact that there existed a vertex like that. Consequently, we say that a~partial solution $(D, \tau, S)$ for $t$ extends to a full solution $(F, \rho)$ if there exists a partial solution $(D', \tau')$ for $G_t$ such that it extends to $(F, \rho)$ and $(D, \tau, S)$ extends to $(D', \tau')$. Similarly, if $(D, \tau, S)$ extends to $(F, \rho)$, we have that $D = \mathsf{Clos}_F(\tau(\beta(t)))$.

It is clear from the definitions that $\MinHFunc(G) \le k$ if and only if there exists a partial solution $(D, \tau, S)$ for the root $r$ of $T$ such that $V(D) = \tau(\beta(r)) \cup S$.

The crucial property differentiating treedepth decompositions and DFS trees is that the edges of DFS trees have to be edges of the original graph as well. That property has interesting implications for our partial solutions. Let us consider a partial solution $(D, \tau, S)$ for a node~$t$ and assume that there exist 
two vertices $u, v \in V(D)$
such that $u \in S$, $v \notin S \cup \tau(\beta(t))$ and $uv \in E(D)$. Such partial solution cannot be extended to a full solution for $G$ as in any extension of that partial solution to a full solution, $v$ has to be an image of a vertex from $V(G) \setminus V(G_t)$, while $u$ was an image of a vertex from $V(G_t) \setminus \beta(t)$. However, these two sets are not adjacent to each other in $G$, contradicting the fact that for an edge of a full solution, its preimages are adjacent to each other. Hence, we may restrict our computations only to partial solutions such that sets $S$ and $V(D) \setminus (S \cup \tau(\beta(t)))$ are not adjacent to each other as other partial solutions do not extend to any full solutions. Such partial solutions will be called \emph{promising}. In other words, each connected component of $D \setminus \tau(\beta(t))$ has to be either fully in~$S$, or disjoint from~$S$. We then prove the following:

\begin{lemma}\label{lem:promising-sets}
For a node $t$ and a pair $(D, \tau)$, there are only $2^{\Oh(k)}$ sets $S$ such that $(D, \tau, S)$ is a promising partial solution for $t$.
\end{lemma}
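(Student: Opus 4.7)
The plan is to show that the forest $D \setminus \tau(\beta(t))$ has $\Oh(k)$ connected components; the promising condition forces $S$ to be a union of a subset of these components, so this immediately gives at most $2^{\Oh(k)}$ admissible $S$. Throughout, write $X := \tau(\beta(t))$; note $|X| \le 2k+1$ since the underlying tree decomposition has width at most $2k$.

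The first step is to show that \emph{every leaf of $D$ lies in $X$}. By \Cref{def:partial-node} we have $D = \mathsf{Clos}_{D'}(X)$ for some partial solution $(D', \tau')$, and any $v \in V(D) \setminus X$ is a proper ancestor in $D'$ of some $u \in X$; the $D'$-child of $v$ on the $D'$-path from $v$ to $u$ is itself an ancestor of $u$, hence lies in $V(D)$, and since $D$ is an induced subgraph of $D'$ this child is also a $D$-child of $v$. Thus $D$ has at most $2k+1$ leaves and, using the well-known fact that in any rooted tree the number of branching vertices (those with $\ge 2$ children) is at most the number of leaves minus one, $D$ has at most $2k$ branching vertices.

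For the second step, set $Y := X \cup B \cup \{\mathrm{root}(D)\}$ where $B$ is the set of branching vertices of $D$, so $|Y| \le 4k+2$; every vertex of $V(D) \setminus Y$ has exactly one parent and one child in $D$. Classify each component $C$ of $D \setminus X$ by whether $C \cap (Y \setminus X)$ is empty. Components that do meet $Y \setminus X$ can each be charged to an element of $Y \setminus X$, yielding at most $2k+1$ of them. Otherwise $C \subseteq V(D) \setminus Y$; following parents upward from the top of $C$ and children downward from the bottom (which exists because every leaf of $D$ lies in $X$), the first vertex outside $C$ encountered in each direction must lie in $X$, as otherwise the component property of $C$ in $D \setminus X$ would place it back inside $C$. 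Hence $C$ coincides with the set of internal vertices of a maximal $Y$-avoiding path in $D$ whose two endpoints both lie in $X$; such paths correspond bijectively to a subset of the edges of the tree on vertex set $Y$ obtained from $D$ by contracting all maximal non-$Y$ paths, of which there are at most $|Y|-1 \le 4k+1$. Summing gives at most $6k+2$ components of $D \setminus X$ and hence $2^{\Oh(k)}$ choices for $S$. The main obstacle is the second step: setting up the skeleton on $Y$ correctly and showing that each ``thin'' component must occupy a distinct skeleton edge whose both endpoints lie in $X$; once this charging is in place, the numerical bound is immediate from $|Y| = \Oh(k)$.
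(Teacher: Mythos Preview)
Your proof is correct. Both your argument and the paper's hinge on the same observation---that every leaf of $D$ lies in $X := \tau(\beta(t))$---and both conclude that $D \setminus X$ has $\Oh(k)$ components, from which the $2^{\Oh(k)}$ bound on $S$ follows immediately via the promising condition.

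The route you take after that observation, however, is more elaborate than necessary. The paper simply notes that $D$ has at most one degree-one vertex outside $X$ (namely the root, if it happens not to lie in $X$), and then invokes the elementary fact that deleting a set $X$ containing all but at most one of the degree-one vertices of a tree leaves at most $|X|$ components. (A quick degree count gives this: $\sum_{v \notin X}\deg_D(v) \ge 2(|V(D)| - |X|) - 1$, hence $\sum_{x \in X}\deg_D(x) \le 2|X| - 1$, and the component formula $\sum_{x \in X}\deg_D(x) - |X| - e_D(X) + 1$ is then at most $|X|$.) This yields the sharper bound $2^{2k+1}$ directly, without introducing the branching set $B$, the skeleton on $Y$, or the case split into ``thick'' and ``thin'' components.

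Your skeleton-and-contraction argument is a perfectly valid alternative and has the virtue of being fully explicit about why the number of components is linear in $k$; it just works harder than needed and produces the looser constant $6k+2$ instead of $2k+1$. If you want to streamline, you can drop the second step entirely and replace it with the degree-counting paragraph above.
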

\begin{proof}
Assume that there exists $S$ such that $(D, \tau, S)$ is a promising partial solution for $t$, as otherwise the claim is vacuously true. Note that, as $D = \mathsf{Clos}_{D'}(\tau'(\beta(t)))$ for some partial solution $(D', \tau')$ for $G_t$ that it extends to, $D$ has at most one degree-one vertex that is not within $\tau(\beta(t))$ (the root of $D$ is the only possible vertex with such property). It implies that the number of connected components of $D \setminus \tau(\beta(t))$ is at most $|\beta(t)| \le 2k+1$. As each of them has to be either fully in $S$ or disjoint from $S$, there are at most $2^{2k+1} = 2^{\Oh(k)}$ possible values of $S$ such that $(D, \tau, S)$ is a promising partial solution.
\end{proof}

As the next step, we will prove the following claim:
\begin{lemma}\label{lem:trees-shapes}
For a node $t$, there are only $k^{\Oh(k)}$ non-isomorphic pairs $(D, \tau)$ across all partial solutions for $t$.
\end{lemma}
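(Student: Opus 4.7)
The plan is to identify inside each $D$ arising as the first coordinate of a partial solution for $t$ a ``skeleton'' of only $\Oh(k)$ distinguished vertices, and then to show that $(D,\tau)$ is determined up to isomorphism by the skeleton together with two short pieces of side data.

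First I would bound the number of undirected degree-$1$ vertices of $D$ by $\Oh(k)$. Since $(D,\tau,S)$ extends to some $(D',\tau')$ with $D=\mathsf{Clos}_{D'}(\tau(\beta(t)))$, every rooted leaf $\ell$ of $D$ must lie in $\tau(\beta(t))$: otherwise $\ell$ would be a strict ancestor in $D'$ of some $\tau(v)$, and the child of $\ell$ on the $D'$-path toward $\tau(v)$ would itself lie in $\mathsf{Clos}_{D'}(\tau(\beta(t)))=D$, contradicting that $\ell$ is a leaf of $D$. The only remaining candidate for an undirected degree-$1$ vertex is the root of $D$ when it has exactly one child, exactly as in the proof of Lemma~\ref{lem:promising-sets}. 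Hence $D$ has at most $|\beta(t)|+1\le 2k+2$ vertices of undirected degree $1$, and the standard fact that a tree with $\ell$ leaves has fewer than $\ell$ branching vertices then caps the number of vertices of undirected degree $\neq 2$ at $\Oh(k)$.

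Second, let $I\subseteq V(D)$ be the union of the root, all vertices of undirected degree $\neq 2$, and the image $\tau(\beta(t))$, so that $|I|=\Oh(k)$. The vertices of $V(D)\setminus I$ partition into maximal vertical paths, each lying strictly between two consecutive elements of $I$ along $D$, and each of length at most $k$ because the height of $D$ is at most $k$. I would then encode the isomorphism type of $(D,\tau)$ by three pieces of data: (i) the rooted-tree isomorphism type of the contracted tree obtained from $D$ by shrinking each maximal interior path to a single edge, of which there are $k^{\Oh(k)}$ on $\Oh(k)$ vertices; (ii) the length in $\{0,1,\ldots,k\}$ of each contracted path, contributing a factor of at most $(k+1)^{\Oh(k)}=k^{\Oh(k)}$; and (iii) for each vertex of the contracted tree, a label in $\beta(t)\cup\{\star\}$ recording whether it equals $\tau(v)$ for some $v\in\beta(t)$ and if so which $v$, contributing at most $(|\beta(t)|+1)^{\Oh(k)}=k^{\Oh(k)}$. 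Multiplying these three bounds yields the claimed $k^{\Oh(k)}$.

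The substantive step is the first one, and its argument is close in spirit to the one already used for Lemma~\ref{lem:promising-sets}; the only bookkeeping care afterwards is that ``non-isomorphic'' is to be interpreted as up to rooted-tree isomorphisms $\phi\colon D_1\to D_2$ satisfying $\phi\circ\tau_1=\tau_2$, which is precisely what the labeling in~(iii) records.
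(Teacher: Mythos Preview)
Your proposal is correct and follows essentially the same approach as the paper's proof: both identify a skeleton $I\subseteq V(D)$ of $\Oh(k)$ ``special'' vertices (root, vertices of degree $\neq 2$, and the image $\tau(\beta(t))$), contract the in-between paths of length $\le k$, and count by multiplying the number of contracted-tree shapes, the choices of path lengths, and the assignment of $\tau$-labels. The only cosmetic differences are that the paper bounds the number of degree-one vertices by noting directly that $D$ is a union of $|\beta(t)|\le 2k+1$ root-to-leaf paths (rather than via your closure argument), and that it records the image of $\tau$ first as an unlabeled subset $M$ and then multiplies by $|\beta(t)|!$ instead of using your $\beta(t)\cup\{\star\}$ labeling; neither difference is substantive.
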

\begin{proof}
Let us first call the vertices of $D$ that are the images of $\tau$ \emph{marked}. For a rooted tree $D$ with a specific set of marked vertices $M$ of size $|\beta(t)|$, there are at most $|\beta(t)|! = k^{\Oh(k)}$ pairs $(D, \tau)$ which result in this specific pair $(D, M)$. In the following, we bound the number of non-isomorphic pairs $(D, M)$ (where the supposed isomorphisms have to map marked vertices to marked vertices and non-marked vertices to non-marked vertices).

Recall that $D$ is a union of at most $|\beta(t)| \le 2k+1$ paths of length at most $k$ from its root. Hence, it has at most $2k$ vertices of degree at least three and at most $2k+2$ vertices of degree one. Let us call all vertices of degree other than two \emph{special} and let the root and all marked vertices be special as well. All non-special vertices have degree two and form a set of paths between pairs of special vertices. If we contract these paths, we will get a rooted tree of size $\Oh(k)$ with marked vertices. There are only $2^{\Oh(k)}$ non-isomorphic rooted trees on $\Oh(k)$ vertices and for each of them it could have at most $2^{\Oh(k)}$ different sets of marked vertices. As the height of $D$ is at most $k$, all contracted paths had length at most $k$. Hence, each of the possible contracted trees could have been an image of contracting non-special vertices for at most $k^{\Oh(k)}$ various trees $D$, hence the total number of non-isomorphic trees $D$ with sets $M$ of $\Oh(k)$ marked vertices is at most $2^{\Oh(k)} \cdot 2^{\Oh(k)}\cdot k^{\Oh(k)} = k^{\Oh(k)}$. Each of these gives rise to at most $k^{\Oh(k)}$ non-isomorphic pairs $(D, \tau)$. 
\end{proof}

Combining \Cref{lem:promising-sets} and \Cref{lem:trees-shapes} we get the following corollary:

\begin{corollary} \label{lem:all-promising}
For a node $t$, there are only $k^{\Oh(k)}$ promising partial solutions $(D, \tau, S)$.
\end{corollary}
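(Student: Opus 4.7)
The plan is to obtain the corollary as a direct product of the two preceding lemmas. Lemma \ref{lem:trees-shapes} supplies $k^{\Oh(k)}$ non-isomorphic pairs $(D, \tau)$ that can arise from partial solutions at $t$, and Lemma \ref{lem:promising-sets} says that for every such fixed pair there are at most $2^{\Oh(k)}$ choices of $S$ that make $(D, \tau, S)$ promising. Multiplying yields
\[
k^{\Oh(k)} \cdot 2^{\Oh(k)} \;=\; k^{\Oh(k)}
\]
promising partial solutions in total, which is exactly the bound claimed.

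The one subtlety worth addressing is that Lemma \ref{lem:trees-shapes} counts $(D, \tau)$ up to isomorphism, so we should convince ourselves that this is the right notion for enumerating partial solutions $(D, \tau, S)$. By Definition \ref{def:partial-node}, the vertices of $D$ outside of $\tau(\beta(t))$ carry no labels from $V(G)$; they are abstract tree nodes whose role is only to record the shape of the rooted tree and the location of the image of $\beta(t)$. Consequently, if two pairs $(D, \tau)$ and $(D', \tau')$ are isomorphic via a root-preserving isomorphism that maps $\tau$ to $\tau'$, then any promising $S \subseteq V(D)$ corresponds to a unique promising $S' \subseteq V(D')$ under that isomorphism, and the two triples are interchangeable as far as the dynamic program is concerned.

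Given this observation, the only step that needs to be written out is the multiplication, so the full argument should fit in a couple of sentences. I do not anticipate any real obstacle, since both enumeration bounds have already been proved and the corollary is essentially bookkeeping.
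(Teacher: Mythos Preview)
Your proposal is correct and matches the paper's approach exactly: the paper simply states that the corollary follows by combining \Cref{lem:promising-sets} and \Cref{lem:trees-shapes}, which is precisely the multiplication $k^{\Oh(k)} \cdot 2^{\Oh(k)} = k^{\Oh(k)}$ you describe. Your remark on counting up to isomorphism is a reasonable clarification, though the paper does not spell it out.
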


Also, as for any partial solution $(D, \tau, S)$, we have that $D$ is a sum of at most $|\beta(t)| \le 2k+1$ paths of length at most $k$ from its root, we get the following observation:

\begin{observation} \label{obs:partial-size}
For any partial solution $(D, \tau, S)$ for any node $t$, we have $|V(D)| = \Oh(k^2)$.
\end{observation}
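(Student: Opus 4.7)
The plan is to unfold the definition of a partial solution and count the vertices of $D$ path by path. By \Cref{def:partial-node}, there exists a partial solution $(D', \tau')$ for $G_t$ to which $(D, \tau, S)$ extends, and $D = \mathsf{Clos}_{D'}(\tau(\beta(t)))$. By \Cref{def:partial-graph}, any partial solution is built on a rooted tree of height at most $k$, so $D'$ has height at most $k$; in particular every root-to-vertex path in $D'$ contains at most $k$ vertices.

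Next, I would invoke the definition of ancestral closure (\Cref{def:closure}): $V(D)$ is precisely the union, over $v \in \tau(\beta(t))$, of the vertex sets of the root-to-$v$ paths in $D'$. Each such path has at most $k$ vertices by the previous observation, and since $\tau$ is injective the number of endpoints equals $|\beta(t)|$. Throughout this section we work with a nice tree decomposition of width at most $2k$ (obtained via Korhonen's $2$-approximation), hence $|\beta(t)| \le 2k+1$.

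Combining these bounds gives $|V(D)| \le (2k+1)\cdot k = \Oh(k^2)$, as claimed. No genuine obstacle arises here: this is essentially the same ``at most $|\beta(t)|$ root-to-leaf paths of length at most $k$'' viewpoint already exploited inside the proof of \Cref{lem:trees-shapes}, so the only care needed is to cite the width bound $|\beta(t)| \le 2k+1$ and the height bound on $D'$ inherited via the chain of ``extends to'' relations in \Cref{def:partial-node}.
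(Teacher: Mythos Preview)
Your argument is correct and essentially identical to the paper's: the paper justifies the observation in one line by noting that $D$ is a union of at most $|\beta(t)|\le 2k+1$ root-to-vertex paths each of length at most $k$, exactly as you do. The only difference is cosmetic---you spell out the chain through \Cref{def:partial-node,def:partial-graph,def:closure} explicitly, whereas the paper treats this as immediate.
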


What remains to be done is to understand the transitions that need to be considered in different types of bags.

\begin{description}[leftmargin=0pt]
    \item[Introduce vertex.] Let $t$ be an introduce vertex $v$ node with a child $t'$, let $(D', \tau', S')$ be a promising partial solution for $t'$ and let $(D, \tau, S)$ be a hypothetical promising partial solution for $t$ such that there exists a full solution $(F, \rho)$ such that $(D, \tau, S)$ and $(D', \tau', S')$ both extend to $(F, \rho)$. We have that $\tau|_{V(H)} = \rho|_{V(H)}$ and $\tau'|_{V(H)} = \rho|_{V(H)}$, so $\tau|_{V(H)} = \tau'|_{V(H)}$.
    As $D' = \mathsf{Clos}_F(\tau'(\beta(t')))$, $D = \mathsf{Clos}_F(\tau(\beta(t)))$ and $\beta(t) = \beta(t') \cup \{v\}$, we have that $V(D) = V(D') \cup V(P_{r \tau(v)})$, where $r$ denotes the root of $F$ and $P_{r \tau(v)}$ is the path 
    between $r$
    and $\tau(v)$ in $F$ (we remind that $\tau(v) = \rho(v)$). 
    If this path is not fully contained within $D'$, there is a~unique deepest vertex $u$ of $D'$ such that all vertices of $V(P_{r \tau(v)}) \setminus V(D')$ are descendants of $u$, or in other words $V(P_{r \tau(v)}) \setminus V(D')$ induces a path in $F$ appended at some vertex $u \in V(D')$. As $|V(D')| = \Oh(k^2)$, there are at most $\Oh(k^2)$ choices for $u$ and there are at most $k$ choices for the length of $P_{r \tau(v)}$, which gives rise to at most $\Oh(k^3)$ possibilities for $D$. In transitions where there is any vertex added (that is, if $V(P_{r \tau(v)}) \setminus V(D') \neq \emptyset$), all added vertices that are not $\tau(v)$ cannot belong to $S$ and cannot be images of $\tau$, so $\tau$ and $S$ are uniquely determined, and that in fact gives rise to at most $\Oh(k^3)$ partial solutions $(D, \tau, S)$. In transitions where $P_{r \tau(v)}$ is fully contained within $D'$, $\tau(v)$ has to belong to $V(D')$ and the choice of it uniquely determines such partial solution $(D, \tau, S)$ which gives rise to at most $\Oh(k^2)$ promising partial solutions (note that $\tau$ has to stay injective and satisfy $\tau(v) \not\in S'$). For all of these at most $\Oh(k^3)+\Oh(k^2) = \Oh(k^3)$ candidate promising partial solutions, we check if they are indeed promising, and if yes, we add them to the set of promising partial solutions for $t$. Note that the procedure generating these candidates did not depend on the choice of $(F, \rho)$.

    In the special case where $t$ is a leaf, we create promising partial solutions $(D, \tau, S)$, where $D$ is a rooted path on at most $k$ vertices, whose leaf is $\tau(v)$ and $S = \emptyset$.
    
    \item[Forget vertex.] Let $t$ be a forget vertex $v$ node with a child $t'$, let $(D', \tau', S')$ be a promising partial solution for $t'$ and let $(D, \tau, S)$ be a hypothetical partial solution for $t$ such that there exists a full solution $(F, \rho)$ such that both $(D, \tau, S)$ and $(D', \tau', S')$ extend to $(F, \rho)$
    (we will see that there is at most one such $(D, \tau, S)$ and that it does not depend on the choice of $(F, \rho)$). By forgetting $v$, we have to add $\tau(v)$ to $S'$ and remove $v$ from the domain of $\tau$, that is, $S \coloneqq S' \cup \{\tau'(v)\}$ and $\tau = \tau'|_{\beta(t)}$. In case $\tau'(v)$ was a leaf of $D'$, $D$ will have a smaller vertex set than $D'$ given by $D = \mathsf{Clos}_{D'}(\tau(\beta(t)))$ (and if it was not, then $D = D'$). Note that the restriction from $D'$ to $\mathsf{Clos}_{D'}(\tau(\beta(t)))$ may incur corresponding restriction of $S$ to $S \cap V(\mathsf{Clos}_{D'}(\tau(\beta(t))))$. However, if for $D$ defined that way we have that $V(D') \setminus V(D) \not\subseteq S' \cup \{\tau'(v)\}$, then such transition is illegal as we remove a vertex that was never assigned its preimage. We also have to additionally check that such defined $(D, \tau, S)$ is promising. If none of these checks failed, we add $(D, \tau, S)$ to the set of promising partial solutions for $t$.
    
    \item[Join.]
        Let $t$ be a join node with children $t_1$ and $t_2$. We consider all pairs $(D_1, \tau_1, S_1)$ and $(D_2, \tau_2, S_2)$ of promising partial solutions for $t_1$ and $t_2$, respectively. If $(D_1, \tau_1)$ is not isomorphic to $(D_2, \tau_2)$ we skip that pair, otherwise they can be thought of having the same set of vertices. If $S_1 \cap S_2 = \emptyset$, we add $(D_1, \tau_1, S_1 \cup S_2)$ to the set of promising partial solutions for $t$. It is easy to see that such partial solution will always be promising.
\end{description}

After we compute sets of promising partial solutions for each node $t$ according to the above rules, we check if there exists a partial solution $(D, \tau, S)$ for the root $r$ of $T$ such that $V(D) = \tau(\beta(r)) \cup S$. If yes, we output that $\MinHFunc(G) \le k$. If needed, the certifying full solution can be restored by investigating the tree of transitions leading to that particular partial solution.
All of these transitions can be realized in $k^{\Oh(1)}$ time per transition. There are at most $k^{\Oh(k)}$ transitions considered per a node $t$, 
hence the computation of the set of promising partial solutions can be done in $k^{\Oh(k)}$ time per node $t$. Therefore, the whole algorithm takes $k^{\Oh(k)} \cdot n$ time.

\newcommand\msoDFSTree{\mso{DFSTree}\xspace}
\section{MSO Formulation for DualMinHLT on general graphs}\label{sectn:courcelle-dualhlt}
\begin{lemma}\label{lem:msoDFSTree}
    There is an \MSOtwo-predicate $\msoDFSTree(T, r)$ of constant length
    which asserts that a set of edges $F$ is a DFS tree rooted at a vertex $r$.
\end{lemma}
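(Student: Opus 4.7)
The plan is to build $\msoDFSTree(F,r)$ as a constant-length conjunction of two \MSOtwo sub-formulas: (i) the edge set $F$ forms a spanning tree of $G$, and (ii) every edge of $G$ joins two vertices that are ancestrally comparable in the rooted tree obtained from $F$ by designating $r$ as the root. By the definition of a DFS tree recalled in the preliminaries, these two conditions together exactly characterize $F$ being (the edge set of) a DFS tree of $G$ rooted at $r$.

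Part (i) uses the standard \MSOtwo encoding of ``spanning tree''. Connectedness of $(V(G),F)$ is asserted by ``for every nonempty proper vertex subset $A \subsetneq V(G)$, some edge of $F$ has one endpoint in $A$ and the other outside $A$''. Acyclicity of $F$ is asserted by ``there is no nonempty subset $F' \subseteq F$ in which every vertex of $G$ is incident to either zero or at least two edges of $F'$''. Both are constant-length and use only vertex-set and edge-set quantification, which are legal in \MSOtwo.

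For part (ii), the key ingredient is an auxiliary predicate $\mathrm{anc}(u,v)$ expressing ``$u$ lies on the unique $r$-to-$v$ path in $F$''. Since $F$ is already forced to be a tree by part (i), I can write
\[
\mathrm{anc}(u,v)\ \equiv\ \exists P \subseteq F\ \bigl(\mathrm{Path}_F(P,r,v)\ \wedge\ (u=r \lor u=v \lor \exists e\in P:\ \mathrm{inc}(u,e))\bigr),
\]
where $\mathrm{Path}_F(P,r,v)$ is the standard \MSOtwo predicate asserting that $P \subseteq F$ is the edge set of a simple $r$--$v$ path: each of $r,v$ is incident to exactly one edge of $P$, every other vertex is incident to either zero or two, and the subgraph $(V(G),P)$ is connected after restriction to vertices incident to $P$. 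Comparability is then $\mathrm{cmp}(u,v) \equiv \mathrm{anc}(u,v) \lor \mathrm{anc}(v,u)$, and the ``no cross-edges'' clause reads
\[
\forall e \in E(G)\ \forall u,v\ \bigl((\mathrm{inc}(u,e) \wedge \mathrm{inc}(v,e)) \rightarrow \mathrm{cmp}(u,v)\bigr).
\]

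I do not foresee any substantial obstacle: both ``spanning tree'' and ``lies on the unique $r$--$v$ path of a tree'' are classical constructs in the \MSOtwo toolbox, and every sub-formula has length independent of $G$. The only mild subtlety is the degenerate case $u=v=r$ with $P=\emptyset$, which is absorbed by the $u=r$ and $u=v$ disjuncts inside $\mathrm{anc}$. Conjoining the spanning-tree predicate with the no-cross-edges predicate yields the desired constant-length \MSOtwo-formula $\msoDFSTree(F,r)$, completing the proof.
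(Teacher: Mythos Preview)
Your proposal is correct and follows essentially the same route as the paper: assert that $F$ is a spanning tree, define an ancestor predicate via ``$u$ lies on the $r$--$v$ path in $F$'', and then require comparability for endpoints of every edge of $G$. The only differences are cosmetic---you spell out the tree and path predicates explicitly rather than citing them, and you (correctly) restrict the comparability check to adjacent pairs, whereas the paper's displayed formula quantifies over all $u,v$ without the adjacency hypothesis (an apparent slip there).
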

\begin{proof}
    It is well-known (e.g.,~\cite{BoriePT92})%
    that there are  constant-length \MSOtwo-predicates $\mso{Tree}(F)$
    and $\mso{Path}(F, s, t)$
    which assert that a set of edges $F$ forms a tree and an $(s, t)$-path, respectively.
    We build
    \begin{align*}
        \mso{SpanningTree}(F) \colon \mso{Tree}(F) \land 
        \forall x~\exists e \in F ~\mso{inc}(x, e)
    \end{align*}
    which asserts that $F$ is a spanning tree in the underlying graph.
    Next, we work towards a predicate that checks that $F$ is the set of edges in a DFS tree.
    To do so, for a given $F$ and root $r$, 
    we first 
    give a predicate $\mso{anc}_{F, r}(u, v)$
    which is true if and only if $u$ is an ancestor of $v$ in the tree $F$ rooted at $r$:
    \begin{align*}
        \mso{anc}_{F, r}(u, v) \colon \exists P \subseteq F
            (\mso{Path}(P, r, v) \land \exists e \in P~ \mso{inc}(u, e))
    \end{align*}
    Now, we can verify if $F$ is a DFS tree rooted at $r$ as follows:
    \begin{align*}
        \msoDFSTree(F, r) \colon \mso{SpanningTree}(F) \land \forall u \forall v 
        (\mso{anc}_{F, r}(u, v) \lor \mso{anc}_{F, r}(v, u))
    \end{align*}
    It is clear that the length of $\msoDFSTree(F, r)$ is bounded by a constant.
\end{proof}
\newcommand\msoProbName[1]{\ensuremath{\phi_{\mbox{\textsc{#1}}}}\xspace}
\newcommand\msoDualMinLT{\msoProbName{D.\,MinHLT}}
\newcommand\msoMinLLT{\msoProbName{MinLLT}}
\newcommand\msoMaxLLT{\msoProbName{MaxLLT}}
\begin{lemma}\label{lem:mso:problems}
    There are \MSOtwo-formulas $\msoDualMinLT$, $\msoMinLLT$, $\msoMaxLLT$, 
    each of length $\Oh(k^2)$, 
    such that for any ($n$-vertex) graph $G$:
    \begin{enumerate}
        \item\label{enum:mso:DualMinLT}
        $G \models \msoDualMinLT$ if and only if $G$ has a DFS tree of height at most $n - k$.
        \item\label{enum:mso:MinLLT} 
        $G \models \msoMinLLT$ if and only if $G$ has a DFS tree with at most $k$ leaves. 
        \item\label{enum:mso:MaxLLT} 
        $G \models \msoMaxLLT$ if and only if $G$ has a DFS tree with at least $k$ leaves. 
    \end{enumerate}
\end{lemma}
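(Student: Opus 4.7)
The plan is to build on \Cref{lem:msoDFSTree}, which supplies a constant-length predicate $\msoDFSTree(F, r)$ asserting that an edge set $F$ with designated root $r$ is a DFS tree. Each of the three formulas will have the shape $\exists F\,\exists r\,(\msoDFSTree(F, r) \wedge \psi(F, r))$, so the task reduces to exhibiting, in each case, an \MSOtwo-formula $\psi(F, r)$ of length $\Oh(k^2)$ that expresses the corresponding constraint on heights or leaves.

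For \msoMaxLLT and \msoMinLLT I would first build a constant-length predicate $\mso{leaf}_F(v)$ stating that $v$ has no proper descendant in $F$, e.g.\ via $\forall u\,(\mso{anc}_{F,r}(v, u) \to u = v)$, reusing the ancestor predicate from the proof of \Cref{lem:msoDFSTree}. Then ``$F$ has at least $k$ leaves'' is encoded by guessing $k$ witnesses $v_1, \ldots, v_k$ and asserting that they are pairwise distinct and all satisfy $\mso{leaf}_F$; ``$F$ has at most $k$ leaves'' is encoded by guessing $k$ candidate vertices and asserting that every leaf equals one of them. Both formulas use $k$ first-order quantifiers together with a conjunction of size $\Oh(k^2)$ (pairwise distinctness, resp.\ disjunctive cover), yielding total length $\Oh(k^2)$.

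The interesting case is \msoDualMinLT, where the target height $n-k$ depends on the size of $G$ and therefore cannot be hard-coded into a fixed formula. The combinatorial observation I would use is the equivalence: a rooted tree $F$ on $n$ vertices has height at most $n-k$ if and only if every vertical path $P$ in $F$ satisfies $|V(F) \setminus V(P)| \ge k$. The forward direction is immediate from $|V(P)| \le h \le n-k$; for the converse, applying the condition to a longest root-to-leaf path, which has exactly $h$ vertices, yields $n-h \ge k$. This trades the unbounded quantity $n-k$ for the bounded quantity $k$. I would then introduce a short predicate $\mso{VertPath}_{F, r}(P, s, t)$ stating that $P \subseteq F$ is an $s$-$t$ path in $F$ and that $s, t$ are comparable in $F$, built from the $\mso{Path}$ predicate and $\mso{anc}_{F, r}$ already constructed in \Cref{lem:msoDFSTree}. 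The resulting $\psi(F, r)$ is $\forall P\,\forall s\,\forall t\,\bigl(\mso{VertPath}_{F, r}(P, s, t) \to \exists v_1\cdots\exists v_k\,(\bigwedge_{i<j} v_i \ne v_j \wedge \bigwedge_i v_i \notin V(P))\bigr)$, where ``$v_i \notin V(P)$'' unfolds to the standard $v_i \ne s \wedge v_i \ne t \wedge \neg\exists e \in P\,\mso{inc}(v_i, e)$, and the length is again dominated by the $\Oh(k^2)$ distinctness conjunction.

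The only real obstacle is spotting the $n$-independent characterization of $\MinHFunc(F) \le n - k$ used above; once that reformulation is in place, the \MSOtwo\ translation is routine bookkeeping on top of the predicates already built in \Cref{lem:msoDFSTree}.
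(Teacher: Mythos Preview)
Your proposal is correct and follows essentially the same approach as the paper: the key reformulation of ``height $\le n-k$'' as ``every root-path leaves at least $k$ vertices off it'' is exactly the paper's idea, and the leaf-count formulas are routine either way. The only cosmetic differences are that the paper encodes the cardinality constraints via a set variable $X$ together with the standard $|X|\le k$ / $|X|\ge k$ predicates (rather than your first-order witness lists with pairwise distinctness), and it detects leaves as degree-$1$ vertices of $F$ rather than via the ancestor predicate.
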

\begin{proof}
    It is well-known that there are \MSOone-predicates $|X| \le k$ and $|X| \ge k$ of length $\Oh(k^2)$ that verify whether a set $X$ has size at most $k$ and at least $k$, respectively~\cite{BoriePT92}.
    Let $F$ be an edge set variable and $r$ a vertex variable.

    \Cref{enum:mso:DualMinLT}.
    To verify if the height of $F$ rooted at $r$ is at most $n - k$,
    we check the following equivalent condition: 
    there is no path $P$ in $F$ that starts at $r$ 
    such that the number of vertices that are not incident with an edge in $P$ is at most $k$.
    We first build a predicate that checks whether a vertex set $X$ consists precisely of the vertices that are not incident with a set of edges $F'$.
    \begin{align*}
        \mso{NotIncident}(F', X)\colon &\forall v (v \in X \leftrightarrow \neg \exists e \in F'~ \mso{inc}(v, e)) \\
        \mso{Height}(F, r) \le n-k \colon 
        &\neg (\exists v \exists P \subseteq F (\mso{Path}(P, r, v) \\ 
        &\land \exists X (\mso{NotIncident}(P, X) \land |X| \le k))
    \end{align*}
    Using the predicate $\msoDFSTree$ from \Cref{lem:msoDFSTree},
    we make the following \MSOtwo-formula and observe it has length $\Oh(k^2)$: 
    $$\msoDualMinLT\colon \exists F \exists r(\msoDFSTree(F, r) \land \mso{Height}(F, r) \le n - k)$$

    For \Cref{enum:mso:MinLLT,enum:mso:MaxLLT},
    we first build a predicate that checks that a set of vertices $X$ consists precisely of the vertices that have degree one in a set of edges $F$ (i.e., if $F$ is a tree, then $X$ is the set of its leaves).
    \begin{align*}
        \mso{deg1}(F, X)\colon \forall v(v \in X \leftrightarrow \exists e_1 \in F (\mso{inc}(v, e_1) \land \neg \exists e_2 \in F(\neg e_1 = e_2 \land \mso{inc}(v, e_2))))
    \end{align*}
    Now, \Cref{enum:mso:MinLLT,enum:mso:MaxLLT} can be shown as follows,
    again using the predicate \msoDFSTree from \Cref{lem:msoDFSTree}:
    \begin{align*}
        \msoMinLLT\colon &\exists F \exists r (\msoDFSTree(F, r) \land \exists X (\mso{deg1}(F, X) \land |X| \le k)) \\
        \msoMaxLLT\colon &\exists F \exists r (\msoDFSTree(F, r) \land \exists X (\mso{deg1}(F, X) \land |X| \ge k))
    \end{align*}
    Again, we observe that both of these formulas have length $\Oh(k^2)$.
\end{proof}
We obtain the following as a direct consequence of \Cref{lem:mso:problems} 
together with Courcelle's Theorem~\cite{Courcelle90} 
and Bodlaender's Theorem~\cite{Bodlaender96}.
\begin{corollary}
    \DualMinHProb,
    \textsc{MinLLT}, and
    \textsc{MaxLLT}
    parameterized by $k$ plus 
    the treewidth of the input graph
    are fixed-parameter tractable.
\end{corollary}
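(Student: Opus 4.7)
The proof plan is to combine \Cref{lem:mso:problems} with two classical algorithmic meta-theorems: Bodlaender's Theorem to obtain a tree decomposition of optimal width in FPT time, and Courcelle's Theorem to evaluate a constructed \MSOtwo-formula on this decomposition in FPT time. Since the MSO machinery has already been built up in \Cref{lem:msoDFSTree,lem:mso:problems}, the corollary essentially amounts to stitching these pieces together.

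More concretely, given an input graph $G$ on $n$ vertices together with the parameter $k$, I would first apply Bodlaender's Theorem to compute, in time $f(\tw(G)) \cdot n$ for some computable $f$, a tree decomposition of $G$ of width $\tw(G)$. Next, depending on which of the three problems is being considered, I would invoke \Cref{lem:mso:problems} to produce one of the formulas $\msoDualMinLT$, $\msoMinLLT$, or $\msoMaxLLT$, each of length $\Oh(k^2)$ and each characterizing yes-instances of its respective problem.

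Finally, I would hand $G$, the tree decomposition of width $\tw(G)$, and the chosen formula $\phi$ to the algorithm guaranteed by Courcelle's Theorem, which decides $G \models \phi$ in time $g(\tw(G), |\phi|) \cdot n$ for some computable $g$. Substituting $|\phi| = \Oh(k^2)$ yields a running time of $g(\tw(G), \Oh(k^2)) \cdot n$, which is fixed-parameter tractable when parameterized by $k + \tw(G)$.

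The only subtlety worth a sentence of justification is the \DualMinHProb case, where the target height $n - k$ grows with $n$: one must verify that the formula $\msoDualMinLT$ does not secretly depend on $n$. Inspection of the construction in the proof of \Cref{lem:mso:problems} shows that the height bound is encoded by asserting the non-existence of a rooted path leaving at most $k$ vertices uncovered, so $|\msoDualMinLT|$ depends only on $k$. Beyond this observation, no step poses a real obstacle; the entire content of the corollary is packaged inside the MSO encodings of \Cref{lem:mso:problems} together with the two meta-theorems.
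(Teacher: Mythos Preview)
Your proposal is correct and matches the paper's approach exactly: the paper simply states that the corollary is a direct consequence of \Cref{lem:mso:problems} together with Courcelle's Theorem and Bodlaender's Theorem, which is precisely the argument you spell out. Your additional remark about $\msoDualMinLT$ not depending on $n$ is a helpful sanity check but is already implicit in the length bound of \Cref{lem:mso:problems}.
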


\section{Conclusion}
In this work we considered the parameterized complexity of \MinHProb, 
the problem of deciding if a graph has a DFS tree of height at most $k$. 
We gave an explicit FPT algorithm for the problem running in time $k^{\Oh(k)}\cdot n$. This algorithm uses a similar strategy as the one for treedepth~\cite{Reidl14}, but we achieve an improvement in the running time compared to the natural adaptation of this algorithm, that would lead to a $2^{\Oh(k^2)}\cdot n$ running time. We further showed the problem remains NP-hard on chordal graphs, but its dual version $\DualMinHProb$ is FPT.
Finally, we also show \DualMinHProb, \MaxLLTProb and \MinLLTProb are FPT parameterized by $k$ plus treewidth. 

As our algorithm for \MinHProb uses exponential space,
future work on this question may include investigating whether the algorithm by Nadara, Pilipczuk and Smulewicz \cite{Nadara22} computing treedepth in the same time but polynomial space may generalize in a similar way.

One of the most interesting open problems in this area is the question of the parameterized complexity of \DualMinHProb,
or even the complexity of \DualMinHProb for fixed $k$.
If a DFS of a graph $G$ starting at any vertex of $G$ results in a Hamiltonian path, then $G$ is called \textit{randomly traceable}~\cite{ChartrandK1968}. Such graphs characterize the NO-instances of \DualMinHProb for $k = 1$. They are
the complete graph $K_m$ on $m$ vertices, the cycle $C_m$ on $m \ge 3$ vertices, and the complete bipartite graph $K_{m,m}$ on $2m$ vertices. Since these graphs are recognizable in linear time, the \DualMinHProb is linear-time solvable for $k = 1$.
Perhaps generalizations of such characterizations can give an opening for the study of the \DualMinHProb for fixed $k$. As the next step in that direction we state the following conjecture:

\begin{conjecture}
    The problem of determining whether an $n$-vertex graph
    has a DFS-tree of height at most $n-2$ 
    is solvable in polynomial time.
\end{conjecture}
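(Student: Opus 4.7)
The plan is to analyze the structure of graphs $G$ with $\MinHFunc(G) \le n-2$ and derive a polynomial-time recognition algorithm, complementing the linear-time solution for $k=1$. First, I would analyze DFS trees of height exactly $n-1$: such a tree has a root-to-leaf path $v_1, \ldots, v_{n-1}$ together with one extra leaf $w$ attached to some $v_j$ with $j \le n-2$ (if $j = n-1$ the tree would be a Hamiltonian path). The cross-edge condition forces $N_G(w) \subseteq \{v_1, \ldots, v_j\}$, since $w$ has no descendants and must be comparable in the tree to all of its $G$-neighbors; in addition, $v_iv_{i+1} \in E(G)$ for all $i < n-1$. Thus NO-instances of $k=2$ are exactly the graphs where every DFS tree is either a Hamiltonian path or such a restricted ``near-path'' tree.

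The algorithmic strategy would proceed by attempting to construct a DFS tree with at least two off-path vertices. The key structural feature of such a tree is a ``first branching vertex'' $b$: the deepest ancestor of the deepest leaf that has at least two children in the tree. I would iterate over all triples consisting of a candidate root $r$, a candidate first branching point $b$, and a simple $r$--$b$ stem path; for each such choice, the side subtrees hanging off the stem must (i) form components of $G$ after removing the stem, and (ii) satisfy the ancestral constraint that a side subtree rooted at stem vertex $v_i$ can only have $G$-edges into $\{v_1, \ldots, v_i\}$. Checking these constraints for one configuration is polynomial; the hope is that a clever enumeration---perhaps fixing only $r$ and $b$ and then solving a compatibility/flow problem to decide the rest---keeps the total work polynomial.

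The main obstacle is establishing a concise characterization of the NO-instances that justifies a polynomial algorithm. I expect NO-instances for $k=2$ to form a limited family extending $\{K_m, C_m, K_{m,m}\}$: informal exploration of small cases suggests including ``randomly traceable plus a pendant'' graphs (tadpoles, $K_m$ with a pendant, $K_{m,m}$ with a pendant) and possibly ``randomly traceable minus one edge'' graphs such as $K_m - e$. Proving completeness of such a conjectured family---that every graph outside it admits a DFS tree with at least two off-path vertices---is where I expect the difficulty to concentrate. A natural proof template is a local-search argument: assume a near-Hamiltonian DFS tree $T$ of height $n-1$ is given, and show that absence of each structural obstruction in the family yields an ``improving move'' (a different choice at the branching vertex, a re-rooting, or a swap of DFS order at some $v_j$) producing a DFS tree of strictly smaller height. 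Making this local-search exhaustive, and certifying that no sporadic NO-instances lie outside the family, is the central theoretical challenge.
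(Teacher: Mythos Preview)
The statement you are addressing is a \emph{conjecture} in the paper, not a theorem: the paper explicitly leaves it open and offers no proof or proof sketch. So there is nothing in the paper to compare your attempt against; your write-up is a research plan toward resolving an open problem, not a proof, and you candidly say as much (``the hope is that a clever enumeration\ldots'', ``Proving completeness of such a conjectured family\ldots is where I expect the difficulty to concentrate'').

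On the content of the plan itself, two concrete gaps stand out. First, your enumeration step is not polynomial as written: iterating over ``all triples consisting of a candidate root $r$, a candidate first branching point $b$, and a simple $r$--$b$ stem path'' can be exponential, since the number of simple $r$--$b$ paths is unbounded in general; you would need to argue that only polynomially many stem shapes matter or replace the enumeration by a structural argument. Second, the proposed list of NO-instances (randomly traceable graphs, randomly traceable plus a pendant, $K_m - e$) is speculative, and the local-search template you sketch---find an ``improving move'' from any height-$(n-1)$ DFS tree whenever $G$ is outside the family---needs a case analysis you have not begun. Your structural observation about height-$(n-1)$ DFS trees (a spine $v_1,\ldots,v_{n-1}$ plus a single leaf $w$ with $N_G(w)\subseteq\{v_1,\ldots,v_j\}$) is correct and a reasonable starting point, but as it stands the proposal is a direction of attack, not a proof of the conjecture.
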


However, we also believe that \DualMinHProb does not admit an fpt-algorithm and conjecture the following:

\begin{conjecture}
    \DualMinHProb is $W[1]$-hard.
\end{conjecture}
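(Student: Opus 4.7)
The plan is to attempt a parameterized reduction from a canonical $W[1]$-hard problem such as \textsc{Multicolored Clique}, which asks whether a graph $H$ with vertex set partitioned into $\ell$ color classes $V_1, \ldots, V_\ell$ contains a clique using exactly one vertex from each class. Given such an instance, we would construct a graph $G$ together with an integer parameter $k = f(\ell)$, polynomial (ideally linear) in $\ell$, such that $G$ has a DFS tree of height at most $|V(G)| - k$ if and only if $H$ has a multicolored clique. Because $k$ depends only on $\ell$, this would establish $W[1]$-hardness parameterized by $k$.

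The construction we envision would consist of three ingredients. First, a long rigid ``spine'' $P$, whose length dominates $|V(G)|$, so that any sufficiently short DFS tree is essentially forced to traverse $P$ from one end as its longest root-to-leaf path. Second, at $\ell$ designated points along $P$, we attach \emph{selection gadgets}, one per color class $V_i$, each being a small subgraph encoding a choice of vertex in $V_i$. The intent is that in any short DFS tree, each selection gadget contributes exactly one vertex off the spine, and the identity of that vertex encodes a particular choice $v \in V_i$. Third, we add \emph{verification edges} between selection gadgets across color classes so that a pair of gadget choices is locally realizable if and only if the two chosen vertices are adjacent in $H$. Summing the savings across all $\ell$ gadgets would then reach the target $|V(G)| - k$ precisely when $\ell$ pairwise compatible selections exist, i.e., when $H$ has a multicolored clique.

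The chief obstacle is that DFS trees are far more flexible than treedepth decompositions: a DFS order can start at any root, choose any order of neighbor exploration, and freely backtrack. Designing gadgets that cannot be ``abused'' by unintended DFS orders to harvest the $k$ savings without corresponding to a genuine multicolored clique is the technical heart of the matter. One must prevent the DFS from rooting itself inside a gadget, from shortcutting across verification edges in a way that bypasses the spine, and from splitting the spine at unforeseen locations. These rigidity constraints typically force the gadgets to have very restricted boundaries and a distinctive internal structure that admits only one qualitatively different traversal pattern per intended ``selection''.

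A closely related difficulty is the converse direction: given a multicolored clique in $H$, we must produce an explicit DFS tree of $G$ realizing the $k$ savings. This requires the intended traversals of all $\ell$ gadgets to be simultaneously consistent, which in turn constrains how verification edges can be embedded without introducing cross-edges that would violate the DFS property (every non-tree edge must connect an ancestor to a descendant). Given that the complexity of \DualMinHProb is open even for $k = 2$, each gadget must necessarily contribute only a small constant number of saved vertices, and the entire $W[1]$-hardness must come from the amplification provided by having $\ell$ gadgets. Engineering this tight interplay between the local structure of each gadget and the global DFS behavior, while keeping verification edges too sparse to permit unintended shortcuts yet dense enough to enforce clique adjacency, is where we expect the bulk of the technical effort to lie.
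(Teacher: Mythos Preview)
The statement you are attempting to prove is a \emph{conjecture} in the paper, not a theorem: the authors explicitly leave the parameterized complexity of \DualMinHProb open and state this as a conjecture in the conclusion. There is therefore no proof in the paper to compare against. Earlier in the introduction the authors even remark that the problem is not known to be polynomial-time solvable for $k=2$, so a full $W[1]$-hardness proof would resolve a genuinely open question.

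Your proposal is not a proof but a research outline, and you are candid about this: you describe a spine-plus-gadgets reduction from \textsc{Multicolored Clique} and then enumerate the obstacles (unintended DFS roots, shortcutting through verification edges, ensuring the converse direction yields a valid DFS tree, the fact that even $k=2$ is open) without resolving any of them. The ``chief obstacle'' and ``closely related difficulty'' paragraphs are accurate assessments of why this is hard, but they are precisely the content a proof would have to supply. In particular, the requirement that every non-tree edge be a back edge is extremely restrictive for verification gadgets that must connect distant points along a spine, and you offer no concrete gadget that survives this constraint. As written, the proposal identifies a plausible line of attack and its pitfalls; it does not establish the conjecture.
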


\paragraph*{Acknowledgements.}
Lima acknowledges support of the Independent Research Fund Denmark grant agreement number 2098-00012B. Sam acknowledges support from the Research Council of Norway grant ``Parameterized Complexity for Practical Computing (PCPC)'' (NFR, no. 274526).
Nadara acknowledges support of the Independent Research Fund Denmark grant 2020-2023 (9131-00044B)
“Dynamic Network Analysis” (while being employed in Denmark) and European
Union’s Horizon 2020 research and innovation programme, grant agreement No.
948057 — BOBR (while being employed in Poland).

\bibliographystyle{plain}
\bibliography{references}

\begin{thebibliography}{10}

\bibitem{BODLAENDER19931}
Hans~L. Bodlaender.
\newblock On linear time minor tests with depth-first search.
\newblock {\em J. Algorithms}, 14(1):1--23, 1993.

\bibitem{Bodlaender96}
Hans~L. Bodlaender.
\newblock A linear-time algorithm for finding tree-decompositions of small treewidth.
\newblock {\em {SIAM} J. Comput.}, 25(6):1305--1317, 1996.

\bibitem{HansBod1998}
Hans~L. Bodlaender, Jitender~S. Deogun, Klaus Jansen, Ton Kloks, Dieter Kratsch, Haiko M{\"{u}}ller, and Zsolt Tuza.
\newblock Rankings of graphs.
\newblock {\em {SIAM} J. Discret. Math.}, 11(1):168--181, 1998.

\bibitem{HansBodGHT1995}
Hans~L. Bodlaender, John~R. Gilbert, Hj{\'{a}}lmtyr Hafsteinsson, and Ton Kloks.
\newblock Approximating treewidth, pathwidth, frontsize, and shortest elimination tree.
\newblock {\em J. Algorithms}, 18(2):238--255, 1995.

\bibitem{BoriePT92}
Richard~B. Borie, R.~Gary Parker, and Craig~A. Tovey.
\newblock Automatic generation of linear-time algorithms from predicate calculus descriptions of problems on recursively constructed graph families.
\newblock {\em Algorithmica}, 7(5{\&}6):555--581, 1992.

\bibitem{ChartrandK1968}
Gary Chartrand and Hudson~V. Kronk.
\newblock Randomly traceable graphs.
\newblock {\em SIAM J. Appl. Math.}, 16(4):696--700, 1968.

\bibitem{DynamicTreedepth}
Jiehua Chen, Wojciech Czerwiński, Yann Disser, Andreas~Emil Feldmann, Danny Hermelin, Wojciech Nadara, Marcin Pilipczuk, Michał Pilipczuk, Manuel Sorge, Bartłomiej Wróblewski, and Anna Zych-Pawlewicz.
\newblock Efficient fully dynamic elimination forests with applications to detecting long paths and cycles.
\newblock In {\em Proceedings of the 2021 ACM-SIAM Symposium on Discrete Algorithms (SODA)}, pages 796--809, 2021.

\bibitem{Courcelle90}
Bruno Courcelle.
\newblock The monadic second-order logic of graphs. {I}. recognizable sets of finite graphs.
\newblock {\em Inf. Comput.}, 85(1):12--75, 1990.

\bibitem{CyganEtAl15}
Marek Cygan, Fedor~V. Fomin, Lukasz Kowalik, Daniel Lokshtanov, D{\'{a}}niel Marx, Marcin Pilipczuk, Michal Pilipczuk, and Saket Saurabh.
\newblock {\em Parameterized Algorithms}.
\newblock Springer, 2015.

\bibitem{ipl2006}
Dariusz Dereniowski and Adam Nadolski.
\newblock Vertex rankings of chordal graphs and weighted trees.
\newblock {\em Inf. Process. Lett.}, 98(3):96--100, 2006.

\bibitem{downeyFellow2013}
Rodney~G. Downey and Michael~R. Fellows.
\newblock {\em Fundamentals of Parameterized Complexity}.
\newblock Springer Publishing Company, Incorporated, 2013.

\bibitem{DvorakK18}
Pavel Dvor{\'{a}}k and Dusan Knop.
\newblock Parameterized complexity of length-bounded cuts and multicuts.
\newblock {\em Algorithmica}, 80(12):3597--3617, 2018.

\bibitem{fellows1989Polytime}
M.~R. Fellows and M.~A. Langston.
\newblock On search decision and the efficiency of polynomial-time algorithms.
\newblock In {\em Proceedings of the Twenty-First Annual ACM Symposium on Theory of Computing}, STOC '89, page 501–512, New York, NY, USA, 1989. Association for Computing Machinery.

\bibitem{fellows1988}
Michael~R. Fellows, Donald~K. Friesen, and Michael~A. Langston.
\newblock On finding optimal and near-optimal lineal spanning trees.
\newblock {\em Algorithmica}, 3(1–4):549–560, November 1988.

\bibitem{FominGS2013}
Fedor~V. Fomin, Archontia~C. Giannopoulou, and Michal Pilipczuk.
\newblock Computing tree-depth faster than $2^n$.
\newblock {\em Algorithmica}, 73(1):202--216, 2015.

\bibitem{freuder1985}
Eugene~C. Freuder and Michael~J. Quinn.
\newblock Taking advantage of stable sets of variables in constraint satisfaction problems.
\newblock In {\em Proceedings of the 9th International Joint Conference on Artificial Intelligence - Volume 2}, IJCAI'85, page 1076–1078, San Francisco, CA, USA, 1985. Morgan Kaufmann Publishers Inc.

\bibitem{FurerY17}
Martin F{\"{u}}rer and Huiwen Yu.
\newblock Space saving by dynamic algebraization based on tree-depth.
\newblock {\em Theory Comput. Syst.}, 61(2):283--304, 2017.

\bibitem{DagstuhlDrawing}
Robert Ganian, Fabrizio Montecchiani, Martin N\"{o}llenburg, Meirav Zehavi, and Liana Khazaliya.
\newblock {New Frontiers of Parameterized Complexity in Graph Drawing (Dagstuhl Seminar 23162)}.
\newblock {\em Dagstuhl Reports}, 13(4):58--97, 2023.

\bibitem{HegerfeldK20}
Falko Hegerfeld and Stefan Kratsch.
\newblock Solving connectivity problems parameterized by treedepth in single-exponential time and polynomial space.
\newblock In Christophe Paul and Markus Bl{\"{a}}ser, editors, {\em Proceedings of the 37th International Symposium on Theoretical Aspects of Computer Science (STACS 2020)}, volume 154 of {\em LIPIcs}, pages 29:1--29:16. Schloss Dagstuhl - Leibniz-Zentrum f{\"{u}}r Informatik, 2020.

\bibitem{KellerhalsK22}
Leon Kellerhals and Tomohiro Koana.
\newblock Parameterized complexity of geodetic set.
\newblock {\em J. Graph Algorithms Appl.}, 26(4):401--419, 2022.

\bibitem{kobayashiTamaki2016}
Yasuaki Kobayashi and Hisao Tamaki.
\newblock Treedepth parameterized by vertex cover number.
\newblock In Jiong Guo and Danny Hermelin, editors, {\em Proceedings of the 11th International Symposium on Parameterized and Exact Computation ({IPEC} 2016)}, volume~63 of {\em LIPIcs}, pages 18:1--18:11. Schloss Dagstuhl - Leibniz-Zentrum f{\"{u}}r Informatik, 2016.

\bibitem{Korhonen21}
Tuukka Korhonen.
\newblock A single-exponential time 2-approximation algorithm for treewidth.
\newblock In {\em Proceedings of the 62nd {IEEE} Annual Symposium on Foundations of Computer Science ({FOCS} 2021)}, pages 184--192. {IEEE}, 2021.

\bibitem{Nadara22}
Wojciech Nadara, Michal Pilipczuk, and Marcin Smulewicz.
\newblock Computing treedepth in polynomial space and linear {FPT} time.
\newblock In Shiri Chechik, Gonzalo Navarro, Eva Rotenberg, and Grzegorz Herman, editors, {\em Proceedings of the 30th Annual European Symposium on Algorithms ({ESA} 2022)}, volume 244 of {\em LIPIcs}, pages 79:1--79:14. Schloss Dagstuhl - Leibniz-Zentrum f{\"{u}}r Informatik, 2022.

\bibitem{NederlofPSW23}
Jesper Nederlof, Michal Pilipczuk, C{\'{e}}line M.~F. Swennenhuis, and Karol Wegrzycki.
\newblock Hamiltonian cycle parameterized by treedepth in single exponential time and polynomial space.
\newblock {\em {SIAM} J. Discret. Math.}, 37(3):1566--1586, 2023.

\bibitem{Nesetril2012}
Jaroslav Ne{\v{s}}et{\v{r}}il and Patrice~Ossona de~Mendez.
\newblock {\em Sparsity: Graphs, Structures, and Algorithms. Bounded Height Trees and Tree-Depth}, pages 115--144.
\newblock Springer Berlin Heidelberg, Berlin, Heidelberg, 2012.

\bibitem{PilipczukS21}
Michal Pilipczuk and Sebastian Siebertz.
\newblock Polynomial bounds for centered colorings on proper minor-closed graph classes.
\newblock {\em J. Comb. Theory {B}}, 151:111--147, 2021.

\bibitem{PilipczukW18}
Michal Pilipczuk and Marcin Wrochna.
\newblock On space efficiency of algorithms working on structural decompositions of graphs.
\newblock {\em {ACM} Trans. Comput. Theory}, 9(4):18:1--18:36, 2018.

\bibitem{Reidl14}
Felix Reidl, Peter Rossmanith, Fernando~S{\'{a}}nchez Villaamil, and Somnath Sikdar.
\newblock A faster parameterized algorithm for treedepth.
\newblock In Javier Esparza, Pierre Fraigniaud, Thore Husfeldt, and Elias Koutsoupias, editors, {\em Proceedings of the 41st International Colloquium on Automata, Languages, and Programming (ICALP 2014)}, volume 8572 of {\em Lecture Notes in Computer Science}, pages 931--942. Springer, 2014.

\bibitem{GraphMinorsII}
Neil Robertson and Paul~D. Seymour.
\newblock Graph minors. {II}. {A}lgorithmic aspects of tree-width.
\newblock {\em Journal of Algorithms}, 7(3):309--322, 1986.

\bibitem{LexBFS}
Donald~J. Rose, R.~Endre Tarjan, and George~S. Lueker.
\newblock Algorithmic aspects of vertex elimination on graphs.
\newblock {\em SIAM Journal on Computing}, 5(2):266--283, 1976.

\bibitem{SamBPN23}
Emmanuel Sam, Benjamin Bergougnoux, Petr~A. Golovach, and Nello Blaser.
\newblock Kernelization for finding lineal topologies (depth-first spanning trees) with many or few leaves.
\newblock In Henning Fernau and Klaus Jansen, editors, {\em Proceedings of the 24th International Symposium on Fundamentals of Computation Theory ({FCT} 2023)}, volume 14292 of {\em Lecture Notes in Computer Science}, pages 392--405. Springer, 2023.

\bibitem{SamFRG23}
Emmanuel Sam, Michael~R. Fellows, Frances~A. Rosamond, and Petr~A. Golovach.
\newblock On the parameterized complexity of the structure of lineal topologies (depth-first spanning trees) of finite graphs: The number of leaves.
\newblock In Marios Mavronicolas, editor, {\em Proceedings of the 13th International Conference on Algorithms and Complexity ({CIAC} 2023)}, volume 13898 of {\em Lecture Notes in Computer Science}, pages 353--367. Springer, 2023.

\end{thebibliography}

\end{document}